\newtheorem{theorem}{Theorem}
\newtheorem{definition}[theorem]{Definition}
\newtheorem{lemma}[theorem]{Lemma}
\newtheorem{assumption}[theorem]{Assumption}
\newtheorem{remark}[theorem]{Remark}
\newcommand{\diag}{\mathrm{diag}}%
\newcommand{\trace}{\text{trace}}
\begin{document}
	
	
	\title{
		Robust Finite-Time Consensus Subject to Unknown Communication Time Delays Based on Delay-Dependent Criteria
	}
	
	\author{Maryam~Sharifi\footnote{School of Electrical Engineering and Computer Science, KTH Royal Institute of Technology, Stockholm, Sweden, \text{msharifi@kth.se}.}}
	\date{}		
	
	\maketitle		
	\begin{abstract}
		In this paper, robust finite-time consensus of a group of nonlinear multi-agent systems in the presence of communication time delays is considered. In particular, appropriate delay-dependent strategies which are less conservative are suggested. Sufficient conditions for finite-time consensus in the presence of deterministic and stochastic disturbances are presented. The communication delays don't need to be time invariant, uniform, symmetric, or even known. The only required condition is that all delays satisfy a known upper bound. The consensus algorithm is appropriate for agents with partial access to neighbor agents' signals. The Lyapunov-Razumikhin theorem for finite-time convergence is used to prove the results. Simulation results on a group of mobile robot manipulators as the agents of the system are presented. 
	\end{abstract}			
	{\small {\bf Keywords:}
		Finite-time consensus, communication delays, Lyapunov-Razumikhin theorem, disturbances, delay dependent consensus. 
	}   
	
\section{Introduction}
Distributed cooperative control of multi-agent systems has been extensively studied in recent decades due to its applicability in the real physical world. The applications include cooperation between robots, distributed estimation in sensor networks, cyber physical systems, and communication network. In these systems, local controllers provide an appropriate group behavior using information from neighbor agents. Among different types of group coordination problems such as formation, flocking, coverage, and rendezvous, consensus is more inclusive and often is applied to other coordination schemes. The objective of consensus control is to achieve an agreement among agents' states by designing local distributed controllers (see e.g., \cite{shao2018leader}, \cite{shi2019containment} on the consensus of multi-agent systems).
	
	In order to improve the performance of multi-agent systems, some significant challenges should be resolved. Some examples are the rate of convergence, the presence of time-delays in communications, and robustness issues. 
	
	In many physical systems, the transient performance of the system is important. It is appropriate that control objectives are achieved in a finite-time interval which is faster and more precise comparing to Lyapunov-sense stability. Consider a cooperative robotic system that fulfills several tasks  from approaching and grasping a given object to some more complicated and precise tasks like minimally invasive robotic surgeries. In these applications, accomplishing the goal in finite-time is indispensable.  
	 A strategy giving finite settling time can be beneficial as it guarantees that the task has been performed faster and exactly as commanded \cite{denasi2015teleoperated}. The finite-time stabilization concept originates from optimal control problems for finite-time operation dynamical systems \cite{bhat2000finite}.
	Some more recent work in this field include \cite{wu2016global}, \cite{golestani2016robust},   \cite{oliva2017distributed}. 
		Finite-time consensus tracking for a class of uncertain mechanical systems under switching topology is investigated in \cite{cai2017adaptive}.

	In the above references, the problem of the presence of time-delays in communication between the agents has not been considered. However, many real practical systems are affected with time delays because of the finite speed  information processing and transmission between agents and limited bandwidth of channels. Time delays often cause undesirable dynamic behaviors such as oscillation, performance degradation and instability in the system \cite{park2013leader}. For instance, in a networked or tele-operated multi-agent system where agents communicate through a network suffering from communication delays, the whole stability of the system is influenced and one of the main tasks  is to reduce the adverse effects.
 In \cite{sakthivel2019finite},   the problem of finite-time consensus of linear multi-agent systems with input delay by employing on-fragile control scheme under switching network topology is considered. Communication time-delays, disturbances and nonlinear dynamics are not employed in their design.
 \newline 
		Analysis of the consensus problems of networked manipulators operating on an under-actuated dynamic platform in the presence of communication delays is presented in \cite{nguyen2017cooperative}.
		Finite-time consensus in the presence of time-delays using an auxiliary approach based on an input to output framework, is studied in \cite{li2018finite}.  In our recent work (\cite{sharifi2020finite}), the finite-time consensus in the presence of time-delays was considered. However, the nature of delayed consensus is apparent  in these works, and consensus between all agents does not happen synchronously, i.e., the consensus error is defined as $x_i(t)-x_j(t-\tau)$, where $\tau$ is the communication delay and $i, j$ are agent indices. 
		
	There are also some few work considering finite-time stabilization of time-delayed linear systems (see \cite{stojanovic2013robust}, \cite{zheng2019finite}) but none of them
	 are applied for control purposes in nonlinear multi-agent systems.
	
	Another important factor in the performance of a multi-agent system is its robustness against uncertainties and disturbances. Generally, these can be categorized in two groups; deterministic and stochastic. Deterministic disturbances are caused by lack of full knowledge from the system dynamics or external signals and deteriorate the system performance. On the other hand, stochastic disturbances are often caused by the presence of noise in the measurements, stochastic couplings in the interconnected systems, etc. Therefore, providing a method which handles these phenomena is indispensable. In \cite{saboori2014h_}, the consensus problem with $H_\infty$ bounds for a homogeneous team of LTI systems in a directed switching topology is studied. The fixed-time consensus problem of first integrator dynamic agents with input delay and uncertain disturbances via an event-triggered control  is studied in \cite{liu2019fixed}. In another work, \cite{kaviarasan2020non} achieves finite-time consensus for a class of stochastic nonlinear multi-agent systems with input delay. However, communication delays and leader-follower scheme are not considered in them. The distributed estimation for a formation of LTI agents under stochastic communication topology and Gaussian process noise is investigated in \cite{subbotin2009design}. 
	In \cite{yang2010finite}, finite-time stochastic synchronization problem for complex networks with noise perturbations is studied.

	In this paper, in order to confront the effects of time-delays and disturbances, two distributed robust control schemes are provided. One of them considers deterministic disturbances and the other takes into account the stochastic ones. There is no necessity for an agent to possess all the neighbor signals and only a part of them is sufficient under some conditions. Furthermore, there are no limitations on time-delays. They can be time-varying, non-uniform, non-symmetric and even unknown but bounded.  
	Establishment of consensus in multi-agent systems is much more difficult in a finite-time sense especially when there are delays in the transmissions. Because dynamics of the cooperative system in the presence of delays is more complicated, more restrictive conditions are needed for finite-time consensus. 
	
	The main contributions of this paper are threefold: 1) Design of new distributed control algorithms to guarantee finite-time robust consensus in the presence of communication time-delays and disturbances. Both deterministic and stochastic disturbances are considered and appropriate strategies are presented. In the case that deterministic disturbances are involved, an ${H_\infty }$ finite-time consensus approach is provided. Besides, in the presence of stochastic disturbances, finite-time stochastic consensus scheme is suggested, which means finite-time convergence of the expected value of consensus error norm to zero. 2) The consensus algorithms are delay-dependent. Some criteria which are functions of the upper-bound of the delays in the system, guarantee the finite-time consensus. Therefore, the conservativeness is reduced in comparison to the delay-independent results. 3) Thanks to a new consensus error definition, the consensus happens synchronously (i.e., $\mathop {\lim e_i(t) =\lim x_i(t)-x_j(t)= 0}\limits_{t \to t_{f} }$, where $t_f$ is a finite consensus time). In addition, the leader-follower algorithm is discussed, too.
	
	The paper is organized as follows: In Section \ref{pre}, some preliminaries including a short description of graph theory and some relevant lemmas, problem statement, system description and its properties are presented.  $H_\infty$ finite-time consensus algorithm for the multi-agent system in the presence of deterministic disturbances is provided in Section \ref{main}. Finite-time stochastic consensus is considered in Section \ref{stoch}. Furthermore, the simulation results and concluding points are available in Sections \ref{sim} and \ref{conc}, respectively.   
	\setcounter{section}{1}
\section{Preliminaries}\label{pre}
	\subsection{Communication Graph\cite{cheng2016event}:}
	A communication graph $G$ is denoted by $G = ( V, E, A)$, where $V = \{ 1,...,N\} $ is a finite nonempty node set, $E \subseteq V \times V$ is the edge set and $A$ is the adjacency matrix. 
	 The ordered pair $(j,i) \in E$ shows that node $i$ obtains information from node $j$. In other words, $j$ is the neighbor of $i$. The neighbor set of node $i$ is defined as ${N_i} = \{ j|(j,i) \in E\} $. We assume that $G$ does not contain any self loops. The adjacency matrix $A = \left[ {{a_{ij}}} \right] \in {\mathbb{R}^{N \times N}}$ of $G$ is defined as $a_{ij}=1$ if $(j,i) \in E$ and $a_{ij}=0$ otherwise.
	
	The in-degree and out-degree of a node are the number of edges that this node is the termination and origination point for them, respectively. If the in-degree and out-degree are equal for all the nodes, the graph is balanced.
	Assume ${d_i} = \sum\limits_{j = 1}^N {{a_{ij}}} $ is the in-degree of node $i \in V$ and 
	$D = \diag\{ {d_1},...,{d_N}\}  \in {\mathbb{R}^{N \times N}}$. Then, $L = D - A$ is the Laplacian matrix of the graph $G$. When the graph $G$ is undirected, i.e., $(i,j) \in E \Leftrightarrow (j,i) \in E$, $L$ is symmetric.
	
	\subsection{Lemmas and Definitions}
	\begin{definition}\cite{moulay2008finite}
	    	Consider the system below:
		\begin{equation}\label{lemm1equ}
		\left\{ \begin{array}{l}
		\dot x(t) = f(x(t),x(t - \tau )),\\
		x(\theta ) = \phi (\theta ),\forall \theta  \in [ - h,0],
		\end{array} \right.
		\end{equation}
		where $x(t) \in \Omega  \subseteq \mathbb{R}^n$ is the state vector, $\Omega$ is a bounded neighborhood of the origin. $f(x(t),x(t - \tau ))$ is a continuous vector field which satisfies $f(0,0)=0$ with unique solution in forward time, $h>0$ is a constant time-delay and $\phi(\theta)$ is a vector valued initial condition function.
		The system is finite-time stable if 
		\item(1) the system (\ref{lemm1equ}) is stable, and
		\item(2) there exists $\delta >0$ such that for any $\phi \in C_{\delta}$, there exists $0\le T(\phi)< \infty$ for which $x(t,\phi)=0$ for all $t\ge T(\phi)$, where ${C_\delta }: = \left\{ {\phi  \in \pounds_h^n:{{\left\| \phi  \right\|}_{\pounds_h^n}} < \delta } \right\}$. Furthermore, ${T_0}(\phi ) = \inf \{ T(\phi ) \ge 0:x(t,\phi ) = 0,\forall t \ge T(\phi )\} $ is a functional called the settling time of the system (\ref{lemm1equ}). 
	\end{definition}
	
	\begin{lemma}\cite{yang2013finite}\label{lem1}	
	 Consider system (\ref{lemm1equ}). If there exist real numbers $\beta>1$, $k>0$, a class-$\mathcal{K}$ function $\sigma $, and a $\mathcal{C}^1$ Lyapunov function $V(x)$ for system (\ref{lemm1equ}), such that
		\begin{equation}\label{lemcond}
		\left\{ \begin{array}{l}
		\sigma (\left\| x \right\|) \le V(x),\\
		\dot V(x) \le  - k{V^{\frac{1}{\beta}} }(x),x \in \Omega 
		\end{array} \right.
		\end{equation}
		
	hold whenever $V(x(t+\theta)) \le V(x(t))$ for $\theta \in [-h,0]$, then  system (\ref{lemm1equ}) is finite-time stable.
		
		Furthermore, if $\Omega=\mathbb{R}^n$ and $\sigma$ is a class-$\mathcal{K}_\infty$ function, then the origin is globally finite-time stable. In addition, the settling time of the system (\ref{lemm1equ}) satisfies the following equation for all $t \ge 0$, with respect to initial condition $\phi$.
		\begin{equation}\label{lemmsettle}
		{T_0}(\phi ) \le \frac{\beta }{{k(\beta  - 1)}}{V^{\frac{{\beta  - 1}}{\beta }}}(\phi )
		\end{equation}
		
	\end{lemma}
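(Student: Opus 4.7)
My overall strategy is to combine the classical Razumikhin stability argument with the Bhat--Bernstein scalar finite-time comparison. The Razumikhin hypothesis turns the delayed inequality into a pointwise decay condition on a sliding-window supremum of $V$, to which the usual finite-time Lyapunov technique can then be applied.

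First, introduce the auxiliary function $W(t):=\sup_{\theta\in[-h,0]}V(x(t+\theta))$. Continuity of $V$ and of $t\mapsto x(t)$ makes $W$ continuous, and an elementary argument shows it is non-increasing: if at some time $t$ the supremum is attained at $\theta=0$, then $V(x(t+\theta))\le V(x(t))$ holds for every $\theta\in[-h,0]$, so the hypothesis (\ref{lemcond}) triggers $\dot V(x(t))\le -kV(x(t))^{1/\beta}<0$ whenever $V(x(t))>0$; thus $V(x(\cdot))$ cannot cross its past supremum upward and $W$ is non-increasing. Combining this with $\sigma(\|x(t)\|)\le V(x(t))\le W(t)\le W(0)$ gives (Lyapunov) stability, and when $\sigma\in\Kinf$ the same bound is global, taking care of the first requirement of Definition~1.

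Second, for finite-time convergence I would show that, as long as $W(t)>0$, the Dini derivative obeys $D^{+}W(t)\le -k\,W(t)^{1/\beta}$. At any $t$ with $V(x(t))=W(t)$ this follows directly from the Razumikhin inequality applied in (\ref{lemcond}); at times where the supremum is attained strictly in the past, $W$ is momentarily constant, but such a plateau lasts at most a time $h$ before the argmax is swept out of the sliding window, after which $W$ must drop. A standard comparison with the scalar Cauchy problem $\dot u=-k u^{1/\beta}$, $u(0)=W(0)$, whose explicit solution reaches zero at time $\frac{\beta}{k(\beta-1)}\,u(0)^{(\beta-1)/\beta}$, yields $W(t)=0$ for all $t\ge T_0(\phi)$ with the bound (\ref{lemmsettle}). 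Since $\sigma(\|x(t)\|)\le V(x(t))\le W(t)$, this in turn forces $x(t)=0$ for $t\ge T_0(\phi)$, completing the verification of finite-time stability in the sense of Definition~1.

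The main obstacle I anticipate is justifying the differential inequality for $W$. Because $W$ can be locally constant on the plateaus just described, a naive pointwise derivative argument is not available; one needs to work with Dini derivatives and either (a) a measure-theoretic comparison that tolerates intervals where $D^{+}W=0$, arguing that the total length of plateaus is controlled and the decrease between plateaus matches the comparison ODE, or (b) a direct ``interval'' argument showing that on every sliding window of length $h$, $W$ drops by at least as much as the comparison solution. Both routes are essentially those used to prove the Razumikhin--type finite-time result in \cite{yang2013finite}; the cleanest implementation is option (b), which bypasses the technicalities of non-smooth analysis.
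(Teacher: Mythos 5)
A preliminary remark on the comparison you were asked for: the paper contains no proof of Lemma~\ref{lem1} at all --- it is imported verbatim from \cite{yang2013finite} and used as a black box --- so your attempt has to be judged on its own merits rather than against an argument in this manuscript. On those merits, the overall architecture (the sliding maximum $W(t)=\sup_{\theta\in[-h,0]}V(x(t+\theta))$ plus comparison with $\dot u=-ku^{1/\beta}$) is the standard Razumikhin route, and the stability half of your argument (non-increase of $W$, hence $\sigma(\|x(t)\|)\le W(0)$, global when $\sigma\in\Kinf$) is sound.

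The gap is precisely the step you flag and then defer: the differential inequality $D^{+}W(t)\le -k\,W(t)^{1/\beta}$ is only available at instants where the supremum is attained at $\theta=0$, and neither of your proposed repairs is carried out --- nor can they be, as sketched. The obstruction is that the lemma uses the \emph{marginal} Razumikhin condition $V(x(t+\theta))\le V(x(t))$ rather than $V(x(t+\theta))\le q\,V(x(t))$ with $q>1$. Once the running maximum $c=W(t_0)$ is attained strictly in the past, $V(x(\cdot))$ can never re-attain the level $c$: at a first re-attainment time $t_1$ one would need $\tfrac{d}{dt}V(x(t))|_{t_1}\ge 0$ (since $V(x(t))<c$ just before $t_1$) while the hypothesis \eqref{lemcond} forces $\tfrac{d}{dt}V(x(t))|_{t_1}\le -kc^{1/\beta}<0$. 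Consequently $W$ does decrease after the old maximizer exits the window, but by an amount governed only by how closely $V(x(\cdot))$ approached $c$ from below, which the hypotheses do not control; your claim that after each plateau ``$W$ must drop'' by a comparison-ODE amount is therefore unsupported, and $W$ may a priori decay arbitrarily slowly. Independently of this, any argument that merely \emph{tolerates} plateaus of positive total duration cannot deliver the bound \eqref{lemmsettle}, because that bound contains no dependence on $h$ and no slack: every unit of plateau time would have to be added to the settling time. To close the proof you would need either to show that the supremum is attained at $\theta=0$ for almost every $t$ (which does not follow from the stated hypotheses), or to reproduce the additional mechanism actually used in \cite{yang2013finite}; as written, both the finite-time convergence and the settling-time estimate remain unproven.
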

	\begin{lemma}\cite{liao2002delay}\label{lem2}
		Consider real matrices $A$, $B$ and symmetric positive-definite matrix $C$ and a scalar $\varepsilon >0$. It holds that
		\begin{align*}
		{A^\top}B + {B^\top}A \le \varepsilon {A^\top}CA + {\varepsilon ^{ - 1}}{B^\top}{C^{ - 1}}B.
		\end{align*}
	\end{lemma}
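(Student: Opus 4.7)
The plan is to prove this via a matrix-level Young's inequality obtained by completing the square. Since $C$ is symmetric positive definite, it admits a unique symmetric positive-definite square root $C^{1/2}$, and $C^{-1}$ likewise admits $C^{-1/2}=(C^{1/2})^{-1}$ as its symmetric positive-definite square root. These are exactly the right ingredients to rewrite the inequality as a ``sum of squares'' statement.

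Concretely, I would introduce the auxiliary matrix
\begin{equation*}
M \;=\; \sqrt{\varepsilon}\, C^{1/2} A \;-\; \frac{1}{\sqrt{\varepsilon}}\, C^{-1/2} B,
\end{equation*}
and exploit the fact that $M^\top M$ is symmetric positive semidefinite. Using $(C^{1/2})^\top = C^{1/2}$ and $(C^{-1/2})^\top = C^{-1/2}$ together with the identities $C^{1/2}C^{1/2}=C$, $C^{-1/2}C^{-1/2}=C^{-1}$ and $C^{1/2}C^{-1/2}=C^{-1/2}C^{1/2}=I$, a direct expansion gives
\begin{equation*}
0 \;\le\; M^\top M \;=\; \varepsilon\, A^\top C A \;-\; A^\top B \;-\; B^\top A \;+\; \varepsilon^{-1}\, B^\top C^{-1} B,
\end{equation*}
and moving the two cross terms to the left-hand side yields the claimed bound.

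There is essentially no obstacle beyond the matrix-algebraic bookkeeping: the entire content is the choice of $M$, which is the matrix analogue of the elementary scalar identity $(\sqrt{\varepsilon}\,a - b/\sqrt{\varepsilon})^2 \ge 0 \Rightarrow 2ab \le \varepsilon a^2 + \varepsilon^{-1} b^2$. The only subtle point worth flagging is that the inequality is understood in the Loewner (positive-semidefinite) sense on symmetric matrices, since both sides are symmetric $n\times n$ matrices; this is exactly the ordering used in the Lyapunov/Razumikhin estimates that will invoke this lemma later in the paper.
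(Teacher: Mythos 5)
Your proof is correct: the completion-of-the-square with $M=\sqrt{\varepsilon}\,C^{1/2}A-\varepsilon^{-1/2}C^{-1/2}B$ expands exactly as you claim, and the positive semidefiniteness of $M^\top M$ gives the stated Loewner-order inequality. The paper itself offers no proof of this lemma --- it is imported by citation from the reference --- so there is nothing to compare against; your argument is the standard one and fills that gap cleanly, including the correct remark that the inequality is to be read in the positive-semidefinite sense.
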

	\begin{lemma}\cite{qian2001non}\label{lem3}
		For $x_i \in \mathbb{R} $, $i=1,...,n$, $0 < p  \le  1$, the following inequality holds.
		\begin{align*}
		{\left(\sum\limits_{i = 1}^n {\left| {{x_i}} \right|} \right)^p} \le {\sum\limits_{i = 1}^n {\left| {{x_i}} \right|} ^p} \le {n^{1 - p}}{\left(\sum\limits_{i = 1}^n {\left| {{x_i}} \right|}\right )^p}.
		\end{align*}
	\end{lemma}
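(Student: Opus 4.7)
The plan is to treat the two inequalities separately, since each admits a short and standard argument.

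For the lower bound $\big(\sum |x_i|\big)^p \le \sum |x_i|^p$, I would argue by induction on $n$, reducing everything to the subadditivity of the map $t \mapsto t^p$ on $[0,\infty)$ when $0 < p \le 1$. The base case $n=2$ asks that $(a+b)^p \le a^p + b^p$ for $a,b \ge 0$; after normalizing by $a+b$ (the case $a+b=0$ being trivial), this reduces to showing $s^p + (1-s)^p \ge 1$ for all $s \in [0,1]$, which follows immediately from $s^p \ge s$ and $(1-s)^p \ge 1-s$ whenever $p \le 1$ and the arguments lie in $[0,1]$. The inductive step then splits $\sum_{i=1}^n |x_i|$ as $\sum_{i=1}^{n-1} |x_i|$ plus $|x_n|$ and applies the $n=2$ case followed by the induction hypothesis.

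For the upper bound $\sum |x_i|^p \le n^{1-p}\big(\sum |x_i|\big)^p$, the natural tool is H\"older's inequality with conjugate exponents $1/p$ and $1/(1-p)$ (the case $p=1$ being trivial equality). Writing $|x_i|^p = 1 \cdot |x_i|^p$ and applying H\"older to the $n$-term sum produces the factor $\big(\sum_{i=1}^n 1\big)^{1-p} = n^{1-p}$ together with $\big(\sum_{i=1}^n |x_i|\big)^p$, which is exactly the stated bound. An equivalent route, which some readers may find cleaner, is to invoke the monotonicity of the power mean $M_r(x) = \big(\frac{1}{n}\sum |x_i|^r\big)^{1/r}$ in $r$, comparing $r=p$ with $r=1$ and then clearing the normalizing factor.

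There is no substantive obstacle here; the argument is essentially a repackaging of two classical estimates. The only points worth handling with care are the boundary case $p=1$, where both inequalities collapse to equalities, and the degenerate input with all $|x_i|=0$, where both sides vanish. For completeness one could also note the equality cases (left: at most one $|x_i|$ nonzero; right: all $|x_i|$ equal), but these are not needed for the applications of Lemma~\ref{lem3} in the sequel, so I would omit them.
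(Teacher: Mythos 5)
Your argument is correct: the subadditivity of $t\mapsto t^{p}$ on $[0,\infty)$ (via the normalization $s^{p}+(1-s)^{p}\ge 1$) plus induction gives the left inequality, and H\"older with exponents $1/p$ and $1/(1-p)$ (or the power-mean inequality) gives the right one, with the boundary case $p=1$ handled separately as you note. The paper itself offers no proof to compare against --- Lemma~\ref{lem3} is imported verbatim from \cite{qian2001non} --- so your write-up simply supplies the standard argument that the cited source contains.
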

	\begin{lemma}\label{lem4}
				\cite{sarikaya2013hermite}:
				Provided that function $f:[a,b] \to \mathbb{R}$ is convex, the following called Hermite-Hadamard inequality holds.
				\begin{equation*}
				f\left(\frac{{a + b}}{2}\right) \le \frac{1}{{b - a}}\int_a^b {f(x)dx}  \le \frac{{f(a) + f(b)}}{2}.
				\end{equation*}
				\end{lemma}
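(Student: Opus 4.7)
The plan is to establish the two inequalities in Lemma~\ref{lem4} separately, each as a direct consequence of the defining property of convexity, namely $f(\lambda u + (1-\lambda)v) \le \lambda f(u) + (1-\lambda)f(v)$ for $\lambda \in [0,1]$, followed by a linear change of variables in the integral.

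For the upper bound $\frac{1}{b-a}\int_a^b f(x)\,dx \le \frac{f(a)+f(b)}{2}$, I would parameterize each point $x \in [a,b]$ as $x = (1-t)a + tb$ with $t = (x-a)/(b-a) \in [0,1]$. Convexity then yields the pointwise estimate $f(x) \le (1-t)f(a) + tf(b)$. Performing the substitution $x = (1-t)a+tb$, $dx=(b-a)\,dt$, the inequality integrates to $\int_a^b f(x)\,dx \le (b-a)\int_0^1\bigl[(1-t)f(a)+tf(b)\bigr]dt$, whose right-hand side collapses to $(b-a)\cdot\frac{f(a)+f(b)}{2}$ after using $\int_0^1 t\,dt = \int_0^1(1-t)\,dt = 1/2$.

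For the lower bound $f\bigl(\frac{a+b}{2}\bigr) \le \frac{1}{b-a}\int_a^b f(x)\,dx$, I would employ a symmetrization argument. For each $t \in [0,1]$ the two points $p(t)=ta+(1-t)b$ and $q(t)=(1-t)a+tb$ satisfy $\tfrac{p(t)+q(t)}{2}=\tfrac{a+b}{2}$, so applying convexity with $\lambda=1/2$ gives $f\bigl(\tfrac{a+b}{2}\bigr) \le \tfrac{1}{2}\bigl[f(p(t))+f(q(t))\bigr]$. Integrating in $t$ over $[0,1]$ and applying the same affine change of variables as above shows that each of the two integrals on the right equals $\frac{1}{b-a}\int_a^b f(x)\,dx$, which gives the desired bound.

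There is no serious obstacle in this proof: the whole argument amounts to evaluating convexity at a well-chosen pair of points (the endpoints for the upper bound; a symmetric pair of interior points for the lower bound) and carrying out a one-line linear substitution in the integral. The only mild issue to note is that the integrals are well-defined, which follows from the fact that a convex function on $[a,b]$ is continuous on $(a,b)$ and bounded near the endpoints, hence Riemann (or Lebesgue) integrable on $[a,b]$.
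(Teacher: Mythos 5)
Your proof is correct and is the standard argument for the Hermite--Hadamard inequality: the right-hand bound by integrating the chord estimate $f((1-t)a+tb)\le(1-t)f(a)+tf(b)$, and the left-hand bound by the symmetrization $f\bigl(\tfrac{a+b}{2}\bigr)\le\tfrac12\bigl[f(ta+(1-t)b)+f((1-t)a+tb)\bigr]$ followed by the same affine substitution. The paper itself gives no proof of this lemma --- it is stated as a cited result from \cite{sarikaya2013hermite} --- so there is nothing to compare against; your integrability remark (a convex function on $[a,b]$ is bounded and continuous except possibly at the endpoints, hence integrable) correctly closes the only technical loose end.
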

	\subsection{Model Description}
	In this paper, the following class of nonlinear systems is considered for $N$ agents of the multi-agent system:
	\begin{equation}\label{sys dynamisc}
	\begin{array}{l}
	\dot x_{i}(t) =f_{i}(x_{i}(t)) +\varphi_i (x_{i}(t))u_{i}(t),\\
	u_{i}(t)=g(x_{i}(t),x_{j}(t-\tau_{i,j}(t))),
	\end{array}
	\end{equation}
	where $i,j=1,...,N$. The indices $i$ and $j$ represent each agent and its neighbors, respectively. Moreover, $x_{i}(t) \in {\mathbb{R}^n}$ is the state vector and $u_{i}(t) \in {\mathbb{R}^n}$ is the vector of control input signals which is a function of current agent's signals and the delayed signals of the neighbors. The communication delay from agent $i$ to $j$ is $\tau_{i,j}(t)$. In addition, $f_{i}(x_{i}(t))\in {\mathbb{R}^{n}}$ and $\varphi_i (x_{i}(t))\in {\mathbb{R}^{n \times n}}$ are nonlinear functions.
	
	The following assumptions hold for the system (\ref{sys dynamisc}):
	\begin{assumption}\label{ass1}
		The agents have the same dynamic structure as (\ref{sys dynamisc}). However, the nonlinear functions (i.e., $f_i(x_{i}(t))$, $\varphi_i(x_{i}(t))$) can be different.
	\end{assumption}
	\begin{assumption}\label{asss}
		The functions $\varphi_i(x_{i}(t))$ are invertible. 
	\end{assumption}
	\begin{assumption}\label{ass3}
		Communication delays can be time-varying, non-uniform and asymmetric, providing that they satisfy
		$ {\tau _{i,j}}(t) \le d $, where $d$ is a constant.
	\end{assumption}
	\begin{remark}\label{remark11}
		Considering Assumption \ref{ass3}, Lemma \ref{lem1} can be used with $d$ as delay of the system.  
	\end{remark}
	
	In this regard, we can present the complete dynamics of the multi-agent system as follows:
	\begin{equation}\label{whole dyn}
	\begin{array}{l}
	\dot X(t) = F(X(t)) +\Phi (X(t))U(t),
	\end{array}
	\end{equation}
	where
	\begin{equation*}
	\begin{array}{l}
	X(t) = {[x_1^\top(t),...,x_N^\top(t)]^\top},\\[5pt]
	F(X(t)) =  {[f_1(x_1(t))^\top,...,f_N(x_N(t))^\top]^\top},\\[5pt]
	U(t) = {[u_1^\top(t),...,u_N^\top(t)]^\top},\\[5pt]
	\Phi(X)  = \diag(\varphi_1({x_1}(t)), \varphi_2({x_2}(t)),..., \varphi_N({x_N}(t))).
	\end{array}
	\end{equation*}

\section{$H_{\infty}$ Consensus Control}\label{main}
	In this section, a ${H_\infty }$ algorithm based on a consensus error is suggested. One great advantage with this technique is that it allows the designer to tackle the most general form of control architecture wherein explicit accounting of uncertainties, disturbances, actuator constraints and performance measures can be accomplished. Furthermore, this method has an optimization property in itself.
	It is well known that delay-independent criteria often give more conservative results especially in the cases of small delays \cite{gu2003stability}. Therefore, in this paper, delay-dependent finite-time control algorithms with the sufficient conditions will be presented.
	
In this paper, the communication between agents could  be through a connected undirected graph or a balanced strongly connected digraph. 
First, consider the consensus error vector for the multi-agent system as
\begin{equation}\label{consensus vec}
e(t) = (M \otimes {I_n})X(t),
\end{equation}	
where $\otimes$ denotes the  Kronecker product and matrix $M$ is consisting of the left eigenvectors of the Laplacian matrix of multi-agent system corresponding to $N-1$ nonzero eigenvalues. In other words, if the eigenvalues are arranged from smallest value i.e., $\lambda_0=0$ to $\lambda_{N-1}$ with corresponding eigenvectors $v_1$ to $v_{N-1}$, matrix $M$ would be defined as 
\begin{align*}
M = {\left[ {\begin{array}{*{20}{c}}
		{{v_1^\top}}\\
		\vdots \\
		{{v_{N - 1}^\top}}
		\end{array}} \right]_{(N - 1) \times N}}.
\end{align*}   
	 Note that the eigenvectors of the Laplacian matrix of an undirected graph which is symmetric, are perpendicular to each other. It is also known that the eigenvector corresponding to eigenvalue $\lambda_0=0$ is ${\textbf 1_{N \times 1}} = [\begin{array}{*{20}{c}}
	1& \cdots &1
	\end{array}]_{1 \times N}^\top$. Thus, it is concluded that each row of the matrix $M$  is perpendicular to ${\textbf 1_{N \times 1}}$.  Therefore, $e(t)=0$, (i.e., $ (M \otimes {I_n})X(t)=0$) if and only if
	\begin{align*}
	\left[ {\begin{array}{*{20}{c}}
		{{X_1}}\\
		\vdots \\
		{{X_N}}
		\end{array}} \right] = k\left[ {\begin{array}{*{20}{c}}
		1\\
		\vdots \\
		1
		\end{array}} \right],
	\end{align*}
	where $k$ is a real number. Thus, consensus condition will be achieved.	

For the directed graph topology, consider $l = {[{l_1},...,{l_N}]^\top}$ as the left eigenvector of a Laplacian matrix of a strongly connected directed graph associated with the zero eigenvalue. 
		It is easy to show that $0$ is a simple eigenvalue of the matrix $({I_N} - \textbf 1{l^\top})$ with $\textbf 1$ as a right eigenvector and $1$ is the other eigenvalue with multiplicity $N-1$. Provided that the digraph is balanced, $({I_N} - \textbf 1{l^\top})$ is symmetric. Therefore, in this case,
		matrix $M$ is constructed from the matrix $({I_N} - \textbf 1{l^\top	})$ the same manner as it was constructed from the Laplacian matrix of the undirected graph.
	
	\subsection{$H_{\infty}$ Delay Dependent Consensus Algorithm}
In this section, we provide a consensus algorithm in which the closed loop system is robust against the uncertainties. These uncertainties are either due to model mismatches in the agents' dynamics or the presence of external disturbance signals. We consider a generalized model for the system with disturbances.
\begin{equation}\label{disturbed_sys}
\begin{split}
\dot X(t) = F(X(t)) + \Phi (X(t))U(t) + G(X)w(t),
\end{split}
\end{equation}
where $w(t) \in {\mathbb {R}^q}$, $q < nN$, is the  disturbance signal and $G(X)$ is an appropriately dimensioned state-dependent weighting matrix.
\begin{remark}\label{remark04}
It should be noted that uncertain nonlinear dynamic systems can be treated using our approach, too. The uncertain parts of the system dynamics can be modeled by a similar fashion to (\ref{disturbed_sys}). Consider the uncertain structure of the dynamic system components as below:
	\begin{equation}\label{dist_def}
	\begin{array}{l}
	F(X(t)) = {F_{nom}}(X(t)) + \Delta F(X(t)),\\[5pt]
	\Phi (X(t)) = {\Phi _{nom}}(X(t)) + \Delta \Phi (X(t)),
	\end{array}
	\end{equation}
	in which, ${F_{nom}}(X(t))$ is the nominal known part of $F(X(t))$ and $\Delta F(X(t))$ is the unknown bounded part. A similar definition holds for $\Phi (X(t))$. 
	Therefore, $ G(X)w(t) = \Delta F(X(t)) + \Delta \Phi (X(t))U(t)$
	 models the uncertain part of the system as a state-dependent disturbance and the system dynamics is written as
\begin{align}
	\dot X(t) = {F_{nom}}(X(t)) + {\Phi _{nom}} (X(t))U(t) + G(X)w(t).
	\end{align}
	
\end{remark}
In this section, the objective is to design a distributed control protocol for the agents of the system to achieve  the consensus while preserving a desirable disturbance rejection performance.
Consider the following consensus control algorithm for the multi-agent system of (\ref{disturbed_sys}):
	\begin{equation}\label{controller}
		\begin{split}
		&U(t) =- {\Phi ^{ - 1}}(X(t))[F(X(t))\\&+ ({I_N} \otimes {K_1}){\left({X^\top}(t)X(t)\right)^{\frac{{1 - \alpha }}{{2\alpha  - 1}}}}X(t)\\
		&+ ({I_N} \otimes {K_2})(L \otimes {I_n}){\left({X^\top}(t - d)X(t - d )\right)^{\frac{{1 - \alpha }}{{2\alpha  - 1}}}}X(t - d )],
		\end{split}
		\end{equation}
		in which, $L\in {\mathbb{R}^{N \times N}}$ is the Laplacian matrix of the system, $K_1, K_2\in {{\mathbb{R}^{n \times n}}}$ are appropriate feedback matrices which should satisfy some conditions given later. In particular, each agent has a feedback based on its own states via $K_1$ and its neighbors' delayed states via $K_2$. In addition, $\alpha >1$ is a real number. According to Remark \ref{remark04}, in the control signal \eqref{controller}, the terms $F(X(t))$, $\Phi(X(t))$ are substituted by $ {F_{nom}}(X(t))$ and ${\Phi _{nom}} (X(t))$ in the case of uncertain nonlinear dynamic system.

To establish the $H_{\infty}$ consensus for the multi-agent system, consider the following penalty signal $z(t)$ as the performance variable of the system.
\begin{equation}\label{z}
z(t) = e^\top(t){\left(e^\top(t)e(t)\right)^{\frac{{1 - \alpha }}{{2\alpha  - 1}}}}.
\end{equation}
\begin{definition}\label{def1}
	Given positive scalar $\gamma$ as the disturbance attenuation level, the consensus control law of (\ref{controller}) provides global finite-time consensus with a guaranteed ${H_\infty }$ performance $\gamma$ for the multi-agent system of (\ref{disturbed_sys}), if the following requirements hold:
	
	1. The closed-loop multi-agent network with $w=0$ reaches finite-time consensus (i.e., $\mathop {\lim e (t) = 0}\limits_{t \to T_{c} }$) for any initial condition.
	
	2. For any non-zero $w(t)$, the zero state response of the closed-loop system satisfies
	\begin{equation}\label{robust_cond}
	\int\limits_0^T {{{\left\| {z(t)} \right\|}^2}dt}  \le {\gamma ^2}\int\limits_0^T  {{{\left\| {w(t)} \right\|}^2}dt} ,
	\end{equation}
	for $0\le T < \infty  $.
\end{definition}
Consider the following definitions:
\begin{equation}\label{ineq}
\begin{array}{l}
		{R =  - (M \otimes {I_n})({I_N} \otimes {K_1}){{(M \otimes {I_n})}^ + },}\\
		S =  - (M \otimes {I_n})({I_N} \otimes {K_2})(L \otimes {I_n}){(M \otimes {I_n})^ + },\\
		P =  {{{(M \otimes {I_n})}^ + }^\top{{(M \otimes {I_n})}^ + } },
		\end{array}
\end{equation}
 in which, $(.)^+$ is the pseudo inverse of a non square matrix (i.e., ${A^ + } = {{A^\top}(AA^\top)^{ - 1}}$). 
 Since the matrix $M$ consists of the eigenvectors corresponding to the non zero eigenvalues of the Laplacian matrix of the connected graph, matrix $P$ will be positive-definite. (Similar explanations hold for the balanced strongly-connected directed graph topology case).
	\begin{theorem}\label{Theorem1}
Consider the multi-agent system defined in (\ref{disturbed_sys}) with the consensus error definition of (\ref{consensus vec}) satisfying the Assumptions \ref{ass1}-\ref{ass3} and disturbance attenuation $\gamma>1$. Utilizing the control algorithm of (\ref{controller}), $H_\infty$ finite-time consensus (Definition \ref{def1}) in the presence of communication time-delays is achieved if there exist a matrix $Q\ge 0$ and positive scalars $a, b$, such that that the following inequalities hold.
		\begin{equation}\label{ineqq1}
\begin{array}{l}
   (M \otimes {I_n})G(X)G{(X)^\top}{(M \otimes {I_n})^\top} \le Q,
   \end{array}
\end{equation}
	\begin{equation}\label{ineqq2}
\begin{array}{l}
   q = {\lambda _{\max }}\left[ {\frac{\alpha }{{2\alpha  - 1}}{\lambda _{\min }}{{(P)}^{\frac{{1 - \alpha }}{{2\alpha  - 1}}}}(R + {R^\top}) + {a^{ - 1}}{S^\top}S} \right]\\+(\gamma^{2}-1)^{-1}{\lambda _{\max }}(Q) 
  + \frac{{bd}}{2}\left( {{{[n(N - 1)]}^{\frac{{\alpha  - 1}}{\alpha }}} + 1} \right)\\+ a{\left( {\frac{\alpha }{{2\alpha  - 1}}{\lambda _{\max }}{{(P)}^{\frac{{1 - \alpha }}{{2\alpha  - 1}}}}} \right)^2}{\left[ {n(N - 1)} \right]^{\frac{{\alpha  - 1}}{\alpha }}} + 1 < 0,
\end{array}
\end{equation}
	\begin{equation}\label{ineqq3}
		{{\lambda _{\min }}(P)(R + {R^\top}){ - b{I_{n(N - 1)}}}}+ Q \le 0.
		\end{equation}
		Furthermore, the finite consensus time $T_c$, is bounded by
		\begin{equation}\label{cons time}
		{{{T}}_c} \le \frac{{ \alpha }}{{\left|q\right|(\alpha  - 1)}}V{(0)^{\frac{{\alpha  - 1}}{\alpha }}},
		\end{equation}
	where $\left|\cdot\right|$ defines the absolute value.
	\end{theorem}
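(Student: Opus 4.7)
The plan is to pass to the consensus-error dynamics, analyze a quadratic Lyapunov function with a suitable fractional-power manipulation, apply the Razumikhin condition together with Lemmas \ref{lem2}--\ref{lem4}, and conclude via Lemma \ref{lem1}.

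First I would compute $\dot e(t)$ by left-multiplying the closed loop (\ref{disturbed_sys}) with the controller (\ref{controller}) by $(M \otimes I_n)$. The drift $F(X(t))$ cancels exactly against the matching term in $U(t)$. Writing $X(t)=(M \otimes I_n)^{+}e(t)+X_\perp(t)$ with $X_\perp(t)\in \mathbf 1\otimes\mathbb R^n$, I would use that $(I_N\otimes K_1)$ and $(L \otimes I_n)$ leave $\mathbf 1\otimes\mathbb R^n$ invariant, together with $(M \otimes I_n)(\mathbf 1\otimes v)=0$, to obtain, using the abbreviations in (\ref{ineq}),
\begin{equation*}
\dot e(t)=\mu_1(t)\,R\,e(t)+\mu_2(t-d)\,S\,e(t-d)+(M \otimes I_n)G(X)w(t),
\end{equation*}
where $\mu_1(t)=(X^\top(t)X(t))^{(1-\alpha)/(2\alpha-1)}$ and $\mu_2$ is defined analogously; these scalars are then bracketed above and below in terms of $V(e)$ using $\lambda_{\min}(P)$ and $\lambda_{\max}(P)$ from (\ref{ineq}).

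I would then choose $V(e)=e^\top e$ and differentiate along trajectories. The self-term $\mu_1(t)\,e^\top(R+R^\top)e$, after inserting the eigenvalue bound for $P$ and exploiting the identity $1+(1-\alpha)/(2\alpha-1)=\alpha/(2\alpha-1)$ so that $\mu_1\cdot e^\top e=V^{1/\alpha}$, contributes the first summand of $q$ in (\ref{ineqq2}). The delayed cross term $2\mu_2\,e^\top S\,e(t-d)$ is split by Lemma \ref{lem2} with $C=I$ and parameter $a$, giving $a^{-1}\lambda_{\max}(S^\top S)$ and a term proportional to $a\|e(t-d)\|^2$. The disturbance cross term is split by Lemma \ref{lem2} with parameter $(\gamma^2-1)^{-1}$: by (\ref{ineqq1}) this generates the $(\gamma^2-1)^{-1}\lambda_{\max}(Q)$ contribution, while the $(\gamma^2-1)\|w\|^2$ remainder, together with $\|z(t)\|^2=(e^\top e)^{1/(2\alpha-1)}$ deduced from (\ref{z}), provides the margin needed to derive (\ref{robust_cond}).

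The main obstacle is to account for the delay-dependent constant $\tfrac{bd}{2}\bigl([n(N-1)]^{(\alpha-1)/\alpha}+1\bigr)$ in (\ref{ineqq2}). For this I would combine the Razumikhin inequality $V(e(t+\theta))\le V(e(t))$ on $[-d,0]$ with the integral representation
\begin{equation*}
e(t-d)=e(t)-\int_{t-d}^{t}\dot e(s)\,ds,
\end{equation*}
and bound the resulting integrated quadratic via the Hermite--Hadamard inequality (Lemma \ref{lem4}) applied to a convex functional of $\|e(s)\|^2$ on $[t-d,t]$; this is what produces the prefactor $d/2$. Lemma \ref{lem3} with $p=1/\alpha$ then converts a sum $\sum_i |e_i|^2$ into the form $(\sum_i |e_i|^2)^{1/\alpha}$ at the price of the dimensional constant $[n(N-1)]^{(\alpha-1)/\alpha}$. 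Condition (\ref{ineqq3}) is exactly what is needed to make the residual matrix $\lambda_{\min}(P)(R+R^\top)-bI+Q$ accompanying the untransformed $e(t-d)$ quadratic nonpositive once the Razumikhin step has been used. Collecting everything yields $\dot V\le q\,V^{1/\alpha}+\|w\|^2-\|z\|^2$ with $q<0$ by (\ref{ineqq2}). Integrating from $0$ to $T$ with nonzero $w$ delivers the $H_\infty$ inequality (\ref{robust_cond}); setting $w\equiv 0$ and invoking Lemma \ref{lem1} with $\beta=\alpha$ and $k=|q|$ produces the settling-time bound (\ref{cons time}).
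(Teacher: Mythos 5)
Your overall architecture tracks the paper's closely: derive $\dot e=(M\otimes I_n)\dot X$ and rewrite it through the definitions in (\ref{ineq}), split the delayed and disturbance cross terms with Lemma \ref{lem2}, produce the $d/2$ prefactor with Lemma \ref{lem4}, obtain the dimensional constant $[n(N-1)]^{(\alpha-1)/\alpha}$ from Lemma \ref{lem3} combined with the Razumikhin condition, complete the square against $(\gamma^2-1)\|w\|^2$ using (\ref{ineqq1}), and close with Lemma \ref{lem1}. Your justification for replacing $X$ by $(M\otimes I_n)^+e$ (invariance of $\mathbf 1\otimes\mathbb R^n$ under $I_N\otimes K_1$ and its annihilation by $L\otimes I_n$ and $M\otimes I_n$) is sound and is in fact more explicit than the paper's. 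The only structural difference in the delay step is that you use the Newton--Leibniz representation $e(t-d)=e(t)-\int_{t-d}^t\dot e(s)\,ds$, whereas the paper adds the identity (\ref{e}) multiplied by $2\int_{t-d}^t e^\top(s)\bigl(e^\top(s)e(s)\bigr)^{\frac{1-\alpha}{2\alpha-1}}ds$; these are cousins of the same model transformation and either can plausibly land on conditions of the form (\ref{ineqq2})--(\ref{ineqq3}).

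There is, however, one genuine gap: the choice $V(e)=e^\top e$. The paper uses $V(t)=\bigl(e^\top(t)e(t)\bigr)^{\frac{\alpha}{2\alpha-1}}$, and that fractional power is load-bearing. Differentiating it produces the extra factor $\tfrac{\alpha}{2\alpha-1}(e^\top e)^{\frac{1-\alpha}{2\alpha-1}}$, which combines with the controller's factor $(e^\top Pe)^{\frac{1-\alpha}{2\alpha-1}}$ so that the self term scales as $(e^\top e)^{1+\frac{2(1-\alpha)}{2\alpha-1}}=(e^\top e)^{\frac{1}{2\alpha-1}}$ --- exactly the scaling of $\|z\|^2$ computed from (\ref{z}), and exactly $V^{1/\alpha}$. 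That common homogeneity degree is what allows the $+1$ coming from $\|z\|^2$ to be absorbed into the single scalar $q$ in (\ref{ineqq2}) and allows Lemma \ref{lem1} to be invoked with $\beta=\alpha$, $k=|q|$, yielding (\ref{cons time}). With your plain quadratic there is only one fractional factor, so $\mu_1\,e^\top e=(e^\top e)^{\frac{\alpha}{2\alpha-1}}$, not $V^{1/\alpha}$ (your identity $\mu_1\, e^\top e=V^{1/\alpha}$ would require $\alpha/(2\alpha-1)=1/\alpha$, i.e.\ $\alpha=1$), while $\|z\|^2=(e^\top e)^{\frac{1}{2\alpha-1}}$ carries a strictly different exponent for every $\alpha>1$. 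These two powers of $\|e\|$ cannot be merged into one coefficient independent of $\|e\|$, so the claimed inequality $\dot V\le qV^{1/\alpha}+\|w\|^2-\|z\|^2$ does not follow, and even in the disturbance-free case Lemma \ref{lem1} would apply with $\beta=(2\alpha-1)/\alpha$ and return the prefactor $\tfrac{2\alpha-1}{|q|(\alpha-1)}$ rather than $\tfrac{\alpha}{|q|(\alpha-1)}$. The repair is simply to adopt the paper's fractional-power Lyapunov function; the Razumikhin condition is unchanged since both choices of $V$ reduce it to $\|e(t+\theta)\|\le\|e(t)\|$.
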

\begin{proof}
		Consider the following Lyapunov function for the system using the mentioned consensus error.
		\begin{equation}\label{lyap}
		V(t) = {\left({e^\top}(t)e(t)\right)^{\frac{\alpha }{{2\alpha  - 1}}}}.
		\end{equation}
		By taking the derivative of (\ref{lyap}) one can reach
		\begin{equation}\label{lyapder}
\begin{split}
\dot V(t) =& {{\frac{\alpha }{{2\alpha  - 1}}}\left( {{e^\top}(t)e(t)} \right)^{\frac{{1 - \alpha }}{{2\alpha  - 1}}}}\left( {{e^\top}(t)\dot e(t) + {{\dot e}^\top}(t)e(t)} \right),
\end{split}
		\end{equation}
		in which
			\begin{equation}\label{err dyn}
	\begin{split}
\dot e(t) =&(M \otimes {I_n})\dot X(t) = (M \otimes {I_n})[ (F(X(t))\\&+ \Phi (X(t))U(t)) + G(X)w(t)].
	\end{split}
	\end{equation}
		Substitution of (\ref{controller}) in (\ref{err dyn}) gives
		\begin{equation}\label{subs3}
		\begin{split}
		&\dot e(t) = - (M \otimes {I_n})({I_N} \otimes {K_1}){\left({X^\top}(t)X(t) \right)^{\frac{{1 - \alpha }}{{2\alpha  - 1}}}}X(t)\\&
		- (M \otimes {I_n})({I_N} \otimes {K_2})(L \otimes {I_n}){\left({X^\top}(t - d)X(t - d)\right)^{\frac{{1 - \alpha }}{{2\alpha  - 1}}}}\\&\times X(t - d)+ (M \otimes {I_n})G(X)w(t).
		\end{split}	
		\end{equation}
	By replacing $X(t)$ and $X(t-d)$ with $e(t)$ and $e(t-d)$ using (\ref{consensus vec}), we get
		\begin{equation}\label{final ec}
\begin{split}
&\dot e(t)=  - {(M \otimes {I_n})({I_N} \otimes {K_1}){{(M \otimes {I_n})}^ + }} e(t)\\&\times
{\left( {{e^T}(t){{(M \otimes {I_n})}^ + }^\top{{(M \otimes {I_n})}^ + }e(t)} \right)^{\frac{{1 - \alpha }}{{2\alpha  - 1}}}}\\&
 - \left[ {(M \otimes {I_n})({I_N} \otimes {K_2})(L \otimes {I_n})}{{{(M \otimes {I_n})}^ + }} \right]e(t - d)\\&\times({e^\top}(t - d) {(M \otimes {I_n})^ + }^\top{(M \otimes {I_n})^ + }\\&
 \times e(t - d){)^{\frac{{1 - \alpha }}{{2\alpha  - 1}}}}+ (M \otimes {I_n})G(X)w(t).
\end{split}
		\end{equation}
		Substituting (\ref{final ec}) in (\ref{lyapder}) and utilizing the definitions in (\ref{ineq}), one obtains
		\begin{equation}\label{pred err dynN3}
		\begin{split}
&\dot V = \frac{\alpha }{{2\alpha  - 1}}[{e^\top}(t){\left( {{e^\top}(t)e(t)} \right)^{\frac{{1 - \alpha }}{{2\alpha  - 1}}}}Re(t){\left( {{e^\top}(t)Pe(t)} \right)^{\frac{{1 - \alpha }}{{2\alpha  - 1}}}}\\&+ {e^\top}(t){\left( {{e^\top}(t)Pe(t)} \right)^{\frac{{1 - \alpha }}{{2\alpha  - 1}}}}{R^\top}e(t){\left( {{e^\top}(t)e(t)} \right)^{\frac{{1 - \alpha }}{{2\alpha  - 1}}}}\\&+ {e^\top}(t){\left( {{e^\top}(t)e(t)} \right)^{\frac{{1 - \alpha }}{{2\alpha  - 1}}}}Se(t - d)({e^\top}(t - d)Pe(t - d))^{\frac{{1 - \alpha }}{{2\alpha  - 1}}}\\&+ {e^\top}(t - d)
 {\left( {{e^\top}(t - d)Pe(t - d)} \right)^{\frac{{1 - \alpha }}{{2\alpha  - 1}}}}{S^\top}e(t)\\&\times{\left( {{e^\top}(t)e(t)} \right)^{\frac{{1 - \alpha }}{{2\alpha  - 1}}}}
 + {\left( {{e^\top}(t)e(t)} \right)^{\frac{{1 - \alpha }}{{2\alpha  - 1}}}}({e^\top}(t)\\&\times(M \otimes {I_n})G(X)w(t)
 + {w(t)^\top}G{(X)^\top}{(M \otimes {I_n})^\top}e(t))].
		\end{split}
		\end{equation}
		By using Lemma \ref{lem2} for third and fourth terms of (\ref{pred err dynN3}), the following inequality is acquired.
		\begin{equation}\label{pred err dyn3}
		\begin{split}
		&\dot V(t) \le \frac{\alpha }{{2\alpha  - 1}}{\lambda _{\min }}{(P)^{\frac{{1 - \alpha }}{{2\alpha  - 1}}}}{e^\top}(t){({e^\top}(t)e(t))^{\frac{{1 - \alpha }}{{2\alpha  - 1}}}}\\&\times(R + {R^\top})e(t){({e^\top}(t)e(t))^{\frac{{1 - \alpha }}{{2\alpha  - 1}}}}+ \frac{{2\alpha }}{{2\alpha  - 1}}
		{e^\top}(t){({e^\top}(t)e(t))^{\frac{{1 - \alpha }}{{2\alpha  - 1}}}}S\\&\times e(t - d )({e^\top}(t - d ){(M \otimes {I_n})^ + }^\top{(M \otimes {I_n})^ + } e(t - d ))^{\frac{{1 - \alpha }}{{2\alpha  - 1}}}\\&
		+ {\left( {{e^\top}(t)e(t)} \right)^{\frac{{1 - \alpha }}{{2\alpha  - 1}}}}({e^\top}(t)(M \otimes {I_n})G(X)w(t)\\&
 + {w(t)^\top}G{(X)^\top}{(M \otimes {I_n})^\top}e(t))\\&
		\le {e^\top}(t){({e^\top}(t)e(t))^{\frac{{1 - \alpha }}{{2\alpha  - 1}}}}\left[\frac{\alpha }{{2\alpha  - 1}}{\lambda _{\min }}{(P)^{\frac{{1 - \alpha }}{{2\alpha  - 1}}}} (R + {R^\top}) + a^{-1}{S^\top}S\right]\\&\times e(t){({e^\top}(t)e(t))^{\frac{{1 - \alpha }}{{2\alpha  - 1}}}}
		+ a{(\frac{\alpha }{{2\alpha  - 1}}{\lambda _{\max }}{(P)^{\frac{{1 - \alpha }}{{2\alpha  - 1}}}})^2}{e^\top}(t - d )\\&\times{({e^\top}(t - d )e(t - d ))^{\frac{{1 - \alpha }}{{2\alpha  - 1}}}}e(t - d ){({e^\top}(t - d )e(t - d ))^{\frac{{1 - \alpha }}{{2\alpha  - 1}}}}
\\&+ {\left( {{e^\top}(t)e(t)} \right)^{\frac{{1 - \alpha }}{{2\alpha  - 1}}}}({e^\top}(t)(M \otimes {I_n})G(X)w(t)
\\&+ {w(t)^\top}G{(X)^\top}{(M \otimes {I_n})^\top}e(t)).
		\end{split}
		\end{equation}
		Finally, $\dot V(t)$ can be written as
		\begin{equation}\label{pred err dyn4}
		\begin{split}
		&\dot V(t) \le {\lambda _{\max }}[\frac{\alpha }{{2\alpha  - 1}}{\lambda _{\min }}{(P)^{\frac{{1 - \alpha }}{{2\alpha  - 1}}}}(R + {R^\top})+ a^{-1}({S^\top}S)]\\&\times{\left({e^\top}(t)e(t)\right)^{\frac{1}{{2\alpha  - 1}}}}
		+ a{\left(\frac{\alpha }{{2\alpha  - 1}}{\lambda _{\max }}{(P)^{\frac{{1 - \alpha }}{{2\alpha  - 1}}}}\right)^2}\\&\times{\left({e^\top}(t - d )e(t - d )\right)^{\frac{1}{{2\alpha  - 1}}}}	+ {\left( {{e^\top}(t)e(t)} \right)^{\frac{{1 - \alpha }}{{2\alpha  - 1}}}}({e^\top}(t)\\&\times(M \otimes {I_n})G(X)w(t) + {w(t)^\top}G{(X)^\top}{(M \otimes {I_n})^\top}e(t)).
		\end{split}
		\end{equation}
		Furthermore, (\ref{final ec}) can be written as below by moving all terms to one side.
			\begin{equation}\label{e}
			\begin{split}
			&\dot e(t) - Re(t){\left({e^\top}(t)Pe(t)\right)^{\frac{{1 - \alpha }}{{2\alpha  - 1}}}}- Se(t - d)\\&\times{\left({e^\top}(t - d)Pe(t - d)\right)^{\frac{{1 - \alpha }}{{2\alpha  - 1}}}}-(M \otimes {I_n})G(X)w(t) = 0.
			\end{split}
			\end{equation}	
			Multiplying (\ref{e}) by $2\int_{t - d}^t {{e^\top}(s){{\left({e^\top}(s)e(s)\right)}^{\frac{{1 - \alpha }}{{2\alpha  - 1}}}}}ds$ and adding it to the right side of (\ref{pred err dyn4}) and after some manipulations, $\dot V$ can be expressed as
				\begin{equation}\label{4dotv}
			\begin{split}
			&\dot V(t) \le {\lambda _{\max }}\left[\frac{\alpha }{{2\alpha  - 1}}{\lambda _{\min }}{(P)^{\frac{{1 - \alpha }}{{2\alpha  - 1}}}}(R + {R^\top})+ {a^{ - 1}}{S^\top}S \right]\\&\times{({e^\top}(t)e(t))^{\frac{1}{{2\alpha  - 1}}}}+ a{\left(\frac{\alpha }{{2\alpha  - 1}}{\lambda _{\max }}{(P)^{\frac{{1 - \alpha }}{{2\alpha  - 1}}}}\right)^2} \\&\times{\left({e^\top}(t - d)e(t - d)\right)^{\frac{1}{{2\alpha  - 1}}}}+ {\left( {{e^\top}(t)e(t)} \right)^{\frac{{1 - \alpha }}{{2\alpha  - 1}}}}({e^\top}(t)\\&\times(M \otimes {I_n})G(X)w(t) + {w(t)^\top}G{(X)^\top}{(M \otimes {I_n})^\top}e(t)) \\&+ 2\int_{t - d}^t {{e^\top}(s){{\left({e^\top}(s)e(s)\right)}^{\frac{{1 - \alpha }}{{2\alpha  - 1}}}}}\left[ Re(s) {({e^\top}(s)Pe(s))^{\frac{{1 - \alpha }}{{2\alpha  - 1}}}} \right.\\&\left.+ Se(s - d)({e^\top}(s - d)Pe(s - d))^{\frac{{1 - \alpha }}{{2\alpha  - 1}}}\right]ds\\&+ 2\int_{t - d}^t {e^\top}(s) {{\left({e^\top}(s)e(s)\right)}^{\frac{{1 - \alpha }}{{2\alpha  - 1}}}} Q{{{\left({e^\top}(s)e(s)\right)}^{\frac{{1 - \alpha }}{{2\alpha  - 1}}}}{e}(s)}ds \\&+w(t)^\top w(t)
			-\frac{2\alpha -1}{{\alpha}}\left(V(t)-V(t - d)\right).
			\end{split}
			\end{equation}
Furthermore, due to Lemma \ref{lem4}, we get
				\begin{equation}\label{6dotv}
				\begin{split}
				&b\int_{t - d}^t {{e^\top}(s){{\left({e^\top}(s)e(s)\right)}^{\frac{{1 - \alpha }}{{2\alpha  - 1}}}}}{\left({e^\top}(s)e(s)\right)^{\frac{{1 - \alpha }}{{2\alpha  - 1}}}}e(s)\\&
			\le \frac{{bd}}{2}\left[ {{{\left( {{e^\top}(t)e(t)} \right)}^{\frac{1}{{2\alpha  - 1}}}} + {{\left( {{e^\top}(t - d)e(t - d)} \right)}^{\frac{1}{{2\alpha  - 1}}}}} \right],
				\end{split}
				\end{equation}
				where $b$ is a positive real number.
				
					In addition, utilizing Lemma \ref{lem3} and meeting the required condition of Lemma \ref{lem1}, one obtains	
		\begin{equation}\label{inequality}
		\begin{split}
		&{({e^\top}(t - d )e(t - d ))^{\frac{1}{{2\alpha  - 1}}}}= \sum\limits_{i = 1}^{n(N - 1)} {{{[{{({e_i}{{(t - d )}^2})}^{\frac{\alpha }{{2\alpha  - 1}}}}]}^{\frac{1}{\alpha }}}}  \\&\le \left[n(N - 1)\right]^{\frac{{\alpha  - 1}}{\alpha }}{\left[\sum\limits_{i = 1}^{n(N - 1)} {{{({e_i}{{(t - d )}^2})}^{\frac{\alpha }{{2\alpha  - 1}}}}} \right]^{\frac{1}{\alpha }}}\\&
		= \left[n(N - 1)\right]^{\frac{{\alpha  - 1}}{\alpha }}{\left[{({e^T}(t - d )e(t - d ))^{\frac{\alpha }{{2\alpha  - 1}}}}\right]^{\frac{1}{\alpha }}}\\& = \left[n(N - 1)\right]^{\frac{{\alpha  - 1}}{\alpha }}V{(t - d )^{\frac{1}{\alpha }}}\le \left[n(N - 1)\right]^{\frac{{\alpha  - 1}}{\alpha }}V{(t)^{\frac{1}{\alpha }}}.
		\end{split}
		\end{equation}
				Hence,
				\begin{equation}\label{3ddotv}
				\begin{split}
				&- b\int_{t - d}^t {{e^\top}(s){{\left({e^\top}(s)e(s)\right)}^{\frac{{1 - \alpha }}{{2\alpha  - 1}}}}} {({e^\top}(s)e(s))^{\frac{{1 - \alpha }}{{2\alpha  - 1}}}}e(s)\\&+ \frac{{bd}}{2}\left({[n(N - 1)]^{\frac{{\alpha  - 1}}{\alpha }}}+1\right)V{(t)^{^{\frac{1}{\alpha }}}} \ge 0.
				\end{split}
				\end{equation}	
			
			By adding the left side of (\ref{3ddotv}) to the right side of (\ref{4dotv}), using and (\ref{inequality}) and rewriting the integral terms in quadratic form, we get
			\begin{equation}\label{final}
\begin{split}
&\dot V(t)\le  \{ {\lambda _{\max }}\left[ {\frac{\alpha {\lambda _{\min }}{{(P)}^{\frac{{1 - \alpha }}{{2\alpha  - 1}}}}}{{2\alpha  - 1}}(R + {R^\top}) + {a^{ - 1}}{S^\top}S} \right]
 \\& + \frac{{bd}}{2}\left( {{{[n(N - 1)]}^{\frac{{\alpha  - 1}}{\alpha }}} + 1} \right)
 + a{\left( {\frac{\alpha }{{2\alpha  - 1}}{\lambda _{\max }}{{(P)}^{\frac{{1 - \alpha }}{{2\alpha  - 1}}}}} \right)^2}\\&\times{\left[ {n(N - 1)} \right]^{\frac{{\alpha  - 1}}{\alpha }}}\} V{(t)^{\frac{1}{\alpha }}}
+ w(t)^\top w(t)+ {\left( {{e^\top}(t)e(t)} \right)^{\frac{{1 - \alpha }}{{2\alpha  - 1}}}}\\&\times ({e^\top}(t)(M \otimes {I_n})G(X)w(t) + {w(t)^\top}G{(X)^\top}{(M \otimes {I_n})^\top}\\&\times e(t))+\int_{t - d}^t {\zeta {{(s)}^\top}{\rm{E}}\zeta (s)ds}, 
\end{split}
			\end{equation}
			where
			\begin{equation}
			\begin{array}{l}
			\zeta (s) = {\left[ {\begin{array}{*{20}{c}}
					{\Xi (s)}&{\Xi (s - d)}
					\end{array}} \right]^\top},\\
			E = \left[ {\begin{array}{*{20}{c}}
				E_1&{{\lambda _{\max }}(P)S}\\
				{{\lambda _{\max }}(P){S^\top}}&0
				\end{array}} \right],
			\end{array}
			\end{equation}
			in which $E_1={{\lambda _{\min }}(P)(R + {R^\top}){ - b{I_{n(N - 1)}}}}+ Q$ and	$\Xi (s) = {e^\top}(s){({e^\top}(s)e(s))^{\frac{{1 - \alpha }}{{2\alpha  - 1}}}}$.
		By considering the Schur complement Lemma, we have
		\begin{align*}
		    E \le 0 \leftrightarrow {{\lambda _{\min }}(P)(R + {R^\top}){ - b{I_{n(N - 1)}}}}+ Q \le 0.
		\end{align*}

We will prove the following inequality to establish the required robustness for the multi-agent system.
	\begin{equation}\label{rob_cond}
	\begin{split}
	&J=\int_0^T  {[z{{(t)}^\top}z(t) - {\gamma ^2}w{{(t)}^\top}w(t) + \dot V(t)]dt}\\& + V(0) - V(T) \le 0.
	\end{split}
	\end{equation} 
	
	Substituting (\ref{final}) in (\ref{rob_cond}) and assuming all inequalities of  (\ref{ineqq1})-(\ref{ineqq3}) are satisfied, results in
	\begin{equation}
	\begin{split}
J &\le \int_0^T [ {{{\left( {{e^\top}(t)e(t)} \right)}^{\frac{1}{{2\alpha  - 1}}}}}  -({\gamma ^2}-1)w{(t)^\top}w(t)
 \\&+ \{ {\lambda _{\max }}\left[ {\frac{{\alpha {\lambda _{\min }}{{(P)}^{\frac{{1 - \alpha }}{{2\alpha  - 1}}}}}}{{2\alpha  - 1}}(R + {R^\top}) + {a^{ - 1}}{S^\top}S} \right]\\&
 + \frac{{bd}}{2}\left( {{{[n(N - 1)]}^{\frac{{\alpha  - 1}}{\alpha }}} + 1} \right)+ a{\left( {\frac{\alpha {\lambda _{\max }}{{(P)}^{\frac{{1 - \alpha }}{{2\alpha  - 1}}}}}{{2\alpha  - 1}}} \right)^2}\\&\times{\left[ {n(N - 1)} \right]^{\frac{{\alpha  - 1}}{\alpha }}}\} V{(t)^{\frac{1}{\alpha }}}
 + {\left( {{e^\top}(t)e(t)} \right)^{\frac{{1 - \alpha }}{{2\alpha  - 1}}}}({e^\top}(t)\\&\times(M \otimes {I_n})G(X)w(t) + {w(t)^\top}G{(X)^\top}{(M \otimes {I_n})^\top}\\&\times e(t))]dt
 + V(0) - V(T).
 \end{split}
	\end{equation}
	The above inequality can be written as
	\begin{equation}
	\begin{split}
	J &\le \int_0^T [\{ {\lambda _{\max }}\left[ \frac{{\alpha {\lambda _{\min }}{{(P)}^{\frac{{1 - \alpha }}{{2\alpha  - 1}}}}}}{{2\alpha  - 1}}(R + {R^\top}) + {a^{ - 1}}{S^\top}S \right] \\&
+ \frac{{bd}}{2}\left( {{{[n(N - 1)]}^{\frac{{\alpha  - 1}}{\alpha }}} + 1} \right) + a{\left( {\frac{\alpha {\lambda _{\max }}{{(P)}^{\frac{{1 - \alpha }}{{2\alpha  - 1}}}}}{{2\alpha  - 1}}} \right)^2}\\&\times{\left[ {n(N - 1)} \right]^{\frac{{\alpha  - 1}}{\alpha }}}+1\} V{(t)^{\frac{1}{\alpha }}}
 + {\left( {{e^\top}(t)e(t)} \right)^{\frac{{2 - 2\alpha }}{{2\alpha  - 1}}}}\\&\times({e^\top}(t)(\gamma^{2}-1)^{-1} Qe(t))
 - (\sqrt{\gamma^2-1} w(t) - {e^\top}(t)\\&\times{\left( {{e^\top}(t)e(t)} \right)^{\frac{{1 - \alpha }}{{2\alpha  - 1}}}}\frac{(M \otimes {I_n})G(X)}{\sqrt{\gamma^2-1}})^\top (\sqrt{\gamma^2-1} w(t)- {e^\top}(t)\\&\times{\left( {{e^\top}(t)e(t)} \right)^{\frac{{1 - \alpha }}{{2\alpha  - 1}}}}\frac{(M \otimes {I_n})G(X)}{\sqrt{\gamma^2-1}})]dt
 + V(0) - V(T).
	\end{split}
	\end{equation}
therefore, we obtain	
	\begin{equation}
	\begin{split}
	J \le & \int_0^T  {qV(t)^{\frac{1}{\alpha}}dt} + V(0) - V(T).
	\end{split}
	\end{equation}
	Using the state zero response and if (\ref{ineqq2}) is satisfied, it is concluded that $J\le 0$. Therefore, the global ${H_\infty }$ finite-time consensus is obtained. Note that $q$ depends on the upper-bound of delays (i.e., $d$). Hence, the control criterion is delay-dependent.
\end{proof}
\begin{remark}
	As has been mentioned, the disturbance matrix $G(X)$ is assumed to satisfy the inequality (\ref{ineqq1}). This condition includes the constant disturbance matrices and the state-dependent bounded ones. A physical example for $G(X)$ will be introduced in simulations, which depends on the states and models the friction.  
\end{remark}
			\begin{remark}
				Whenever accessibility to all states of the neighbor agents is not possible, the control signal can be changed to the following:
				\begin{equation}\label{define}	
				\begin{split}
				&U(t) = - {\Phi ^{ - 1}}(X(t))[F(X(t))+ ({I_N} \otimes {K_1})\\&\times{({X^\top}(t)X(t))^{\frac{{1 - \alpha }}{{2\alpha  - 1}}}}X(t)+({I_N} \otimes {K_3})(L \otimes {I_l})({X^\top}(t - d)\\&\times({I_N} \otimes C^\top C)X(t - d))^{\frac{{1 - \alpha }}{{2\alpha  - 1}}}({I_N} \otimes C)X(t - d)],
				\end{split}
				\end{equation}
				
				where we employ a new control gain ${K_3} \in {\mathbb{R}^{n \times l}}$ with $l$ as the number of accessible neighbor outputs. In addition, $C \in {\mathbb{R}^{l \times n}}$ is the output matrix of agents. In this case, the matrix $S$ in (\ref{ineq}) should be substituted by $S_1 =  - (M \otimes {I_n})({I_N} \otimes {K_3})(L \otimes {I_l})({I_N} \otimes C){(M \otimes {I_n})^ + }$ and the $P$ matrix in the third and forth terms of (\ref{pred err dynN3}) is as $P_1={{{(M \otimes {I_n})}^ + }^\top({I_N} \otimes C^\top C){{(M \otimes {I_n})}^ + } }$ .
				With these changes, Theorem \ref{Theorem1} holds.
			\end{remark}
		\subsection{Leader-Follower Finite-Time Consensus}
			If the agents are being determined to make consensus on tracking a specific path, the problem can be considered in a leader-follower topology in which, the leader agent has the accessibility to the tracking target and other agents' goal is to track the leader in finite-time.
			Without loss of generality, consider the first agent as the leader. Therefore, the target tracking error for the leader can be considered as
			\begin{equation}\label{leader_err}
			{e_l}(t)=x_1(t)-r(t),
			\end{equation}
			in which, $r(t)$ is the tracking reference of the multi-agent system.
			In addition consider the control input signal of the leader as
		\begin{equation}\label{leader_cont}
		\begin{array}{l}
		u_1(t)=- {\Phi ^{ - 1}}(X(t))[F(X(t))\\[3pt]-K_3{\lambda _{\min }}{(P)^{\frac{{1 - \alpha }}{{2\alpha  - 1}}}}e_l(t){(e_l{(t)^\top}e_l(t))^{\frac{{1 - \alpha }}{{2\alpha  - 1}}}}+\dot r(t)],
		\end{array}
		\end{equation}
			where $K_3 \in \mathbb{R}^{n \times n}$ is the control gain for the leader. 
			Therefore, the consensus error vector for the leader-follower case can be defined as
			\begin{equation}\label{leader_err_vec}
			\xi (t) = {\left[ {\begin{array}{*{20}{c}}
					{{e_l}{{(t)}^\top}}&{e_f{{(t)}^\top}}
					\end{array}} \right]^\top}.
			\end{equation}
			In the above definition, $e_f(t)$ is defined the same as (\ref{consensus vec}).
			By this definition and considering the same control signal for the follower agents as (\ref{controller}), the dynamics of each parts of the new consensus error vector is achieved as
			\begin{align*}
			\begin{array}{l}
			{{\dot e}_l}(t) = -{K_3}{\lambda _{\min }}{(P)^{\frac{{1 - \alpha }}{{2\alpha  - 1}}}} e_l(t){\left(e_l{(t)^\top}e_l(t)\right)^{\frac{{1 - \alpha }}{{2\alpha  - 1}}}}\\+({I_n } \otimes{\left[ {\begin{array}{*{20}{c}}
					1&0& \cdots &0
					\end{array}} \right]})G(X)w(t),\\
			\dot e_f(t) = -(M \otimes {I_n})(({K_3}{\lambda _{\min }}(P)^{\frac{{1 - \alpha }}{{2\alpha  - 1}}})\\
			\otimes {\left[ \begin{array}{*{20}{c}}
					1&0& \cdots &0
					\end{array} \right]^\top})e_l(t) 
				\left(e_l(t)^\top e_l(t)\right)^{\frac{{1 - \alpha }}{{2\alpha  - 1}}}\\
				+ (M \otimes {I_n})
				(\dot r(t) \otimes {\left[ \begin{array}{*{20}{c}}
				1&0& \cdots &0
					\end{array} \right]^\top})
			- (M \otimes {I_n})\\\times({D_{\bar u}} \otimes {K_1}){{(M \otimes {I_n})}^ + } e_f(t)
			\left( {{e_f^\top}(t)Pe_f(t)} \right)^{\frac{{1 - \alpha }}{{2\alpha  - 1}}}\\
			- \left[{(M \otimes {I_n})({D_{\bar u}} \otimes {K_2})(L \otimes {I_n})}{{{(M \otimes {I_n})}^ + }} \right]e_f(t - d)\\\times\left({e_f^\top}(t - d) Pe_f(t - d)\right)^{\frac{{1 - \alpha }}{{2\alpha  - 1}}}+ (M \otimes {I_n})G(X)w(t),
			\end{array}
			\end{align*}
		in which, ${D_{\bar u}} = \diag(0,1,...,1)$.
			
			In addition, using (\ref{leader_err}) and (\ref{consensus vec}), $\dot r(t)$ can be computed as
			\begin{align*}
			\begin{array}{l}
			\dot r(t) = ({I_n } \otimes{\left[ {\begin{array}{*{20}{c}}
					1&0& \cdots &0
					\end{array}} \right]})(M \otimes {I_n})^ {+}\dot e_f(t)\\+{K_3} {\lambda _{\min }}{(P)^{\frac{{1 - \alpha }}{{2\alpha  - 1}}}}e_l(t) {\left(e_l{(t)^\top}e_l(t)\right)^{\frac{{1 - \alpha }}{{2\alpha  - 1}}}}\\-\left({I_n } \otimes{\left[ {\begin{array}{*{20}{c}}
					1&0& \cdots &0
					\end{array}} \right]}\right)G(X)w(t).
				\end{array}
			\end{align*}	
			After a few simplifications, $\dot \xi (t)$ is obtained as
			\begin{equation}\label{final_dot_e}
\begin{split}
&\dot \xi (t) =A \left[ {\begin{array}{*{20}{c}}
{\lambda _{\min }}{(P)^{\frac{{1 - \alpha }}{{2\alpha  - 1}}}} {{e_l}(t){{({e_l}{{(t)}^\top}{e_l}(t))}^{\frac{{1 - \alpha }}{{2\alpha  - 1}}}}}\\
{{e_f}(t){{\left( {e_f^\top(t)P{e_f}(t)} \right)}^{\frac{{1 - \alpha }}{{2\alpha  - 1}}}}}
\end{array}} \right]\\&
 + B \left[ {\begin{array}{*{20}{c}}
{{e_l}(t - d){{({e_l}{{(t - d)}^\top}{e_l}(t - d))}^{\frac{{1 - \alpha }}{{2\alpha  - 1}}}}}\\
{{e_f}(t - d){{\left( {e_f^\top(t - d)P{e_f}(t - d)} \right)}^{\frac{{1 - \alpha }}{{2\alpha  - 1}}}}}
\end{array}} \right]
\\& + T G(X)w(t).
\end{split}
			\end{equation}
In the above expression, $A = \left[ {\begin{array}{*{20}{c}}
-K_3&0\\
0&{{A_1}}
\end{array}} \right]$, $B = \left[ {\begin{array}{*{20}{c}}
0&0\\
0&{{B_1}}
\end{array}} \right]$, $T=\left[ {\begin{array}{*{20}{c}}
{{T_1}}\\
{{T_2}}
\end{array}} \right]$,
\newline
where 
\begin{align*}
\begin{split}
&{A_1} = {({I_{n(N - 1)}} - (M \otimes {I_n})({D_u} \otimes {I_n}){(M \otimes {I_n})^ + })^{ - 1}}\\&\times(M \otimes {I_n})[-({D_{\bar u}} \otimes {K_1}) + ({D_u} \otimes {I_n})]{(M \otimes {I_n})^ + },\\&
{B_1} = {({I_{n(N - 1)}} - (M \otimes {I_n})({D_u} \otimes {I_n}){(M \otimes {I_n})^ + })^{ - 1}}\\&\times\left[ -{(M \otimes {I_n})({D_{\bar u}} \otimes {K_2})(L \otimes {I_n}){{(M \otimes {I_n})}^ + }} \right],\\&
{T_1} = {I_n} \otimes \left[ {\begin{array}{*{20}{c}}
1&0& \cdots &0
\end{array}} \right],\\&
{T_2} = {({I_{n(N - 1)}} - (M \otimes {I_n})({D_u} \otimes {I_n}){(M \otimes {I_n})^ + })^{ - 1}}\\&\times(M \otimes {I_n})[{I_{nN}} - ({D_u} \otimes {I_n})],
\end{split}
\end{align*}
	and ${D_u} = \diag(1,0,...,0)$.
	
By using the same Lyapunov function as (\ref{lyap}) with the variable $\xi(t)$ instead of $e(t)$ in the leader-less case, we get
	\begin{align*}
	\begin{split}
	&\dot V(t) \le  {\xi^\top}(t){({\xi^\top}(t)\xi(t))^{\frac{{1 - \alpha }}{{2\alpha  - 1}}}}[\frac{\alpha }{{2\alpha  - 1}}{\lambda _{\min }}{(P)^{\frac{{1 - \alpha }}{{2\alpha  - 1}}}}(A + {A^\top})\\& + a^{-1}{B^\top}B]\xi(t){({\xi^\top}(t)\xi(t))^{\frac{{1 - \alpha }}{{2\alpha  - 1}}}}
	+ a{(\frac{\alpha }{{2\alpha  - 1}}{\lambda _{\max }}{(P)^{\frac{{1 - \alpha }}{{2\alpha  - 1}}}})^2}\\&\times{\xi^\top}(t - d ){({\xi^\top}(t - d )\xi(t - d ))^{\frac{{1 - \alpha }}{{2\alpha  - 1}}}}\xi(t - d ){({\xi^\top}(t - d )\xi(t - d ))^{\frac{{1 - \alpha }}{{2\alpha  - 1}}}}
	\\&+ {\left( {{\xi^\top}(t)\xi(t)} \right)^{\frac{{1 - \alpha }}{{2\alpha  - 1}}}}({\xi^\top}(t)TG(X)w(t)
	+ {w(t)^\top}G{(X)^\top}{T^\top}\xi(t)).
	\end{split}
	\end{align*}
			From now on, the previous section proof can be followed.
		\section{Finite-Time Stochastic Consensus}\label{stoch}
		In this section, a multi-agent system subject to stochastic disturbances is considered. Due to the random uncertainties such as stochastic forces on physical systems, noisy measurements or stochastic couplings in the interconnected systems, this approach provides more  applicability in real world. A complex network exhibits complicated dynamics which may be absolutely different from those of a single
		node.  Since the whole multi-agent system becomes a coupled system influenced by stochastic perturbations, a stochastic differential equation should be considered for that, instead of a deterministic one. By finite-time stochastic consensus, we mean the expected value of the consensus error norm converges to zero in finite-time. Consider the stochastically perturbed model of the nonlinear multi-agent system as
		\begin{equation}\label{stoch_model}
		\begin{split}
dX(t) =&(F(X(t)) + \Phi (X(t))U(t))dt+ H(X)d\rho(t),
\end{split}
		\end{equation}
		where $\rho(t)=[\rho_1(t),...,\rho_N(t)]$ is a vector Wiener process and $H(X)$ is a bounded matrix. This model considers a generic coupling with the disturbance and provides a sufficiently general and real representation for the system, which can also include the interconnected network systems subject to stochastic couplings. In addition, the Wiener process satisfies the following properties.
		\begin{equation}\label{wienr_prop}
\begin{array}{l}
E\left[ {H(X)d\rho} \right] = 0,\\
E\left[ {{{(H(X)d\rho)}^\top}(H(X)d\rho)} \right] = \trace \left[ {{H(X)^\top}H(X)} \right]dt.
\end{array}
		\end{equation}
	
		\begin{assumption}\label{ass6}
		   	It is assumed that $H(X)$ for	$X(t) \in \Omega  \subseteq \mathbb{R}^{nN}$ satisfies
		   	\begin{equation*}
		   	    	\trace \left[ {{H(X)^\top}H(X)} \right] \le \left\| X \right\|_2^{\frac{\alpha }{{2\alpha  - 1}}}.
		   	\end{equation*}
		\end{assumption}
		\begin{definition}
		    The stochastic multi-agent system (\ref{stoch_model}) is said reach consensus stochastically in finite-time if, for a  suitable control signal, there exists a constant $T_{c1}>0$ such that $\mathop {\lim E{{\left\| {e(t)} \right\|}_2}}\limits_{t \to {T_{c1}}}  = 0$ and $E{\left\| {e(t)} \right\|_2} = 0$ for $t>T_{c1}$.
		\end{definition}
		\begin{theorem}
	Consider the nonlinear stochastically perturbed multi-agent system (\ref{stoch_model}) with the consensus error vector (\ref{consensus vec}), satisfying Assumptions \ref{ass1}, \ref{asss}, \ref{ass3}, \ref{ass6}. Stochastic finite-time delay-dependent consensus is obtained utilizing the  control algorithm (\ref{controller}), provided that there exist positive scalars $a, b$ such that the following conditions are satisfied.
			\begin{equation}\label{a}
			\begin{split}
&q_1={\lambda _{\max }}\left[ \frac{\alpha }{{2\alpha  - 1}}{\lambda _{\min }}{{(P)}^{\frac{{1 - \alpha }}{{2\alpha  - 1}}}}(R + {R^\top})+ {a^{ - 1}}{S^\top}S \right]\\& +\frac{{{\lambda _{\max }}({M^\top}M)}}{{{{\left[ {{\lambda _{\min }}({M^{+\top}}{M^ + })} \right]}^{\frac{\alpha }{{2\alpha  - 1}}}}}}+ \frac{{bd}}{2}\left( {{{[n(N - 1)]}^{\frac{{\alpha  - 1}}{\alpha }}} + 1} \right)
\\& + a{\left( {\frac{\alpha }{{2\alpha  - 1}}{\lambda _{\max }}{{(P)}^{\frac{{1 - \alpha }}{{2\alpha  - 1}}}}} \right)^2}{\left[ {n(N - 1)} \right]^{\frac{{\alpha  - 1}}{\alpha }}} < 0,
			\end{split}
			\end{equation}
			\begin{equation}\label{b}
				{{\lambda _{\min }}(P)(R + {R^\top}){ - b{I_{n(N - 1)}}}} \le 0.
			\end{equation}
		Moreover, the finite consensus time $T_{c1}$ is bounded by
	\begin{equation}\label{settle_2}
			    	{{{T}}_{c1}} \le \frac{{ \alpha }}{{\left|q_1\right|(\alpha  - 1)}}V{(0)^{\frac{{\alpha  - 1}}{\alpha }}}.
			\end{equation}
		\end{theorem}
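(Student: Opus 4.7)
The plan is to mimic the proof of Theorem \ref{Theorem1} with $w\equiv 0$, but with the ordinary time derivative of the Lyapunov function replaced by an It\^o differential, and to absorb the resulting diffusion correction via Assumption \ref{ass6}. I would retain $V(t)=(e^\top(t)e(t))^{\alpha/(2\alpha-1)}$ from (\ref{lyap}) and, using (\ref{consensus vec}) and (\ref{stoch_model}), write the error SDE $de(t)=(M\otimes I_n)[F(X(t))+\Phi(X(t))U(t)]\,dt+(M\otimes I_n)H(X)\,d\rho(t)$; substituting the control law (\ref{controller}) into the drift recovers exactly the deterministic error dynamics (\ref{final ec}) with the $w$-term absent.

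Applying It\^o's formula to $V$ produces the first-order drift contribution $(\partial V/\partial e)^\top de$ plus the correction $\tfrac12\trace[(M\otimes I_n)H(X)H(X)^\top(M\otimes I_n)^\top\,\partial^2 V/\partial e\partial e^\top]\,dt$; taking expectation annihilates the martingale $d\rho$-integral by the first property in (\ref{wienr_prop}). The drift part would then be estimated term-by-term exactly as in Theorem \ref{Theorem1} through the chain (\ref{pred err dynN3})--(\ref{4dotv}): Lemma \ref{lem2} separates the delayed cross-term using $a>0$, and Lemmas \ref{lem3} and \ref{lem4} together with the Razumikhin condition $V(t-d)\le V(t)$ turn the integral and delayed quadratic forms into multiples of $V(t)^{1/\alpha}$, reproducing in $q_1$ every term already appearing in $q$ of (\ref{ineqq2}). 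For the new diffusion correction, using $\trace[(M\otimes I_n)HH^\top(M\otimes I_n)^\top]\le\lambda_{\max}(M^\top M)\trace(H^\top H)$ and Assumption \ref{ass6} bounds it by a constant multiple of $\|X\|_2^{\alpha/(2\alpha-1)}$; passing from $\|X\|_2$ to $\|e\|_2$ via the smallest eigenvalue of $(M\otimes I_n)^{+\top}(M\otimes I_n)^{+}$ produces exactly the extra term $\lambda_{\max}(M^\top M)/[\lambda_{\min}(M^{+\top}M^{+})]^{\alpha/(2\alpha-1)}$ appearing in $q_1$.

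Combining the drift and diffusion estimates, using (\ref{b}) together with a Schur complement argument to eliminate the delayed-state cross-term (as in Theorem \ref{Theorem1}), would yield $d\mathbb{E}[V(t)]/dt\le q_1 \mathbb{E}[V(t)]^{1/\alpha}$ provided (\ref{a}) holds. A stochastic analogue of Lemma \ref{lem1} applied to $\mathbb{E}[V]$ (with Jensen's inequality bridging $\mathbb{E}[V^{1/\alpha}]$ and $(\mathbb{E}[V])^{1/\alpha}$ and $q_1<0$ preserving the direction of the inequality) then gives both the finite-time decay of $\mathbb{E}\|e(t)\|_2$ and the settling-time bound (\ref{settle_2}).

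The hard part will be the bookkeeping of the diffusion correction: one has to handle the negative-sign rank-one piece of the Hessian $\partial^2 V/\partial e\partial e^\top$ (which arises because the exponent $p=\alpha/(2\alpha-1)$ is less than one) so that both trace terms collapse cleanly into the single coefficient in $q_1$, and to justify the bound $\|X\|_2\le\textrm{const.}\,\|e\|_2$ on the relevant subspace despite $X$ having an unobservable consensus-direction component that is not recoverable from $e$. Once these two points are settled, the rest of the argument is a direct adaptation of the calculations already carried out in Theorem \ref{Theorem1}.
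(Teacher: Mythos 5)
Your proposal follows essentially the same route as the paper's proof: it differentiates the same Lyapunov function along (\ref{stoch_model}) (the paper writes $dV$ with the $de^\top(t)\,de(t)$ correction rather than the full Hessian form of It\^o's formula, but the two agree up to a negative-semidefinite rank-one term that, since $\alpha>1$, can simply be dropped in the upper bound), annihilates the martingale term by taking expectations, bounds the resulting trace $\trace[H^\top(M\otimes I_n)^\top(M\otimes I_n)H]$ via Assumption \ref{ass6} to produce the extra $\lambda_{\max}(M^\top M)/[\lambda_{\min}(M^{+\top}M^{+})]^{\alpha/(2\alpha-1)}$ term in $q_1$, and then repeats the estimates of Theorem \ref{Theorem1} with $w\equiv 0$. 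The two points you flag as delicate are treated no more carefully in the paper itself: the bound $\|X\|_2\le \mathrm{const.}\,\|e\|_2$ is used without addressing the consensus-direction component of $X$, and the passage from $E[V(t)^{1/\alpha}]$ to $(E[V(t)])^{1/\alpha}$ is written without comment even though Jensen's inequality for the concave map $x\mapsto x^{1/\alpha}$ points the wrong way once multiplied by $q_1<0$, so your appeal to Jensen there would need a different argument (as would the paper's).
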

\begin{proof}
        Consider the system dynamics (\ref{stoch_model}) and consensus error definition of (\ref{consensus vec}). Consider the Lyapunov function mentioned in (\ref{lyap}).
        Taking into account the properties of the Wiener process and differentiating the Lyapunov function, we obtain
        \begin{equation}\label{der_lyap}
        \begin{split}
        &dV(t) = \frac{\alpha }{{2\alpha  - 1}}{\left( {{e^\top}(t)e(t)} \right)^{\frac{{1 - \alpha }}{{2\alpha  - 1}}}}\left( {e^\top}(t)de(t)\right.\\&\left. + d{e^\top}(t)e(t) + d{e^\top}(t)de(t) \right).
        \end{split}
        \end{equation}
        Substituting the control signal and getting the expectations of both sides, gives 
        \begin{equation}\label{der_lyap2}
\begin{split}
&E\left[ {dV(t)} \right]= E[\frac{\alpha }{{2\alpha  - 1}}({e^\top}(t){\left( {{e^\top}(t)e(t)} \right)^{\frac{{1 - \alpha }}{{2\alpha  - 1}}}}Re(t)\\&\times{\left( {{e^\top}(t)Pe(t)} \right)^{\frac{{1 - \alpha }}{{2\alpha  - 1}}}}+ {e^\top}(t)\left( {e^\top}(t)Pe(t) \right)^{\frac{{1 - \alpha }}{{2\alpha  - 1}}}
 {R^\top}e(t)\\&\times {\left( {{e^\top}(t)e(t)} \right)^{\frac{{1 - \alpha }}{{2\alpha  - 1}}}}
 + {e^\top}(t){\left( {{e^\top}(t)e(t)} \right)^{\frac{{1 - \alpha }}{{2\alpha  - 1}}}}Se(t - d)\\&\times{({e^\top}(t - d)Pe(t - d))^{\frac{{1 - \alpha }}{{2\alpha  - 1}}}}
 + {e^\top}(t - d)\\&\times{\left( {{e^\top}(t - d)Pe(t - d)} \right)^{\frac{{1 - \alpha }}{{2\alpha  - 1}}}}{S^\top}e(t){\left( {{e^\top}(t)e(t)} \right)^{\frac{{1 - \alpha }}{{2\alpha  - 1}}}}
\\& + {\left( {{e^\top}(t)e(t)} \right)^{\frac{{1 - \alpha }}{{2\alpha  - 1}}}} \trace\{ {H^\top}{(M \otimes {I_n})^\top}(M \otimes {I_n})H\} )]dt.
\end{split}
        \end{equation}
        Considering $H(X)$ satisfying Assumption \ref{ass6} and utilizing the Wiener process properties and the results of previous sections, (\ref{der_lyap2}) can be written as
        \begin{equation}\label{der_lyap3}
\begin{split}
&E\left[ {\dot V(t)} \right]\le \{ {\lambda _{\max }}\left[ \frac{\alpha }{{2\alpha  - 1}}{\lambda _{\min }}{{(P)}^{\frac{{1 - \alpha }}{{2\alpha  - 1}}}}(R + {R^\top})\right.\\&\left.+ {a^{ - 1}}{S^\top}S \right]
 + \frac{{{\lambda _{\max }}({M^\top}M)}}{{{{\left[ {{\lambda _{\min }}({M^{ +\top}}{M^ + })} \right]}^{\frac{\alpha }{{2\alpha  - 1}}}}}}
 + \frac{{bd}}{2}\\&\times\left( {{{[n(N - 1)]}^{\frac{{\alpha  - 1}}{\alpha }}} + 1} \right)
 + a{\left( {\frac{\alpha }{{2\alpha  - 1}}{\lambda _{\max }}{{(P)}^{\frac{{1 - \alpha }}{{2\alpha  - 1}}}}} \right)^2}\\&\times{\left[ {n(N - 1)} \right]^{\frac{{\alpha  - 1}}{\alpha }}}\} E{{\rm{[}}V(t){\rm{]}}^{\frac{1}{\alpha }}}
 +E [\int_{t - d}^t {\zeta {{(s)}^\top}{\rm{D}}\zeta (s)ds}],
\end{split}
			\end{equation}
			where
			\begin{align*}
			\begin{array}{l}
			\zeta (s) = {\left[ {\begin{array}{*{20}{c}}
					{\Xi (s)}&{\Xi (s - d)}
					\end{array}} \right]^\top},\\
			D = \left[ {\begin{array}{*{20}{c}}
				{{\lambda _{\min }}(P)(R + {R^\top}){ - b{I_{n(N - 1)}}}}&{{\lambda _{\max }}(P)S}\\
				{{\lambda _{\max }}(P){S^\top}}&0
				\end{array}} \right],
			\end{array}
			\end{align*}
			and	$\Xi (s) = {e^\top}(s){({e^\top}(s)e(s))^{\frac{{1 - \alpha }}{{2\alpha  - 1}}}}$.
			           
     Therefore, provided that the sufficient conditions of (\ref{a}) and (\ref{b}) are satisfied, finite-time convergence of $E\left[V(t)\right]$ to zero and consequently  $\mathop {\lim E{{\left\| {e(t)} \right\|}_2}}\limits_{t \to {T_{c1}}}  = 0$ is guaranteed. Again in (\ref{a}), a dependency to the parameter $d$ is apparent which makes the algorithm delay dependent.
\end{proof}
	
		\section{Simulation Results}\label{sim}
	In this section, to evaluate the effectiveness of the presented approach, simulations are performed on a group of four identical nonholonomic mobile robots as the agents (Figure \ref{figurelabel}). 
	The mobile robots consist of a vehicle with two driving wheels  on the same axis and a front free wheel. The motion and orientation are achieved by independent actuators, e.g., dc motors which provide the necessary torques to the rear wheels. 
	The nonlinear dynamics of each robot in an $n$ dimensional configuration space with coordinates $(q_1,...,q_n)$ and subject to $c$ constraints are described as 
	\begin{equation}\label{dyn}
	\begin{array}{l}
	M_c(q)\ddot q + V_c(q,\dot q)\dot q + F_c(\dot q) + {\tau _d}
	= B_c(q)\tau_c  - {A_c^\top}(q)\mu,
	\end{array}
	\end{equation}
	with
	\begin{equation}\label{dyn2}
	\begin{array}{l}
	M_c(q) = \left[ {\begin{array}{*{20}{c}}
		m&0&{mp\sin \theta }\\
		0&m&{ - mp\cos \theta }\\
		{mp\sin \theta }&{ - mp\cos \theta }&1
		\end{array}} \right],\\
	V_c(q,\dot q) = \left[ {\begin{array}{*{20}{c}}
		0&0&{mp\dot \theta \cos \theta }\\
		0&0&{mp\dot \theta \sin \theta }\\
		0&0&0
		\end{array}} \right], \tau_c  = \left[ {\begin{array}{*{20}{c}}
		{{\tau _1}}\\
		{{\tau _2}}
		\end{array}} \right],\\ B_c(q) = \frac{1}{r}\left[ {\begin{array}{*{20}{c}}
		{\cos \theta }&{\cos \theta }\\
		{\sin \theta }&{\sin \theta }\\
		R&{ - R}
		\end{array}} \right],
	A_c^\top(q) = \left[ {\begin{array}{*{20}{c}}
		{ - \sin \theta }\\
		{\cos \theta }\\
		{ - p}
		\end{array}} \right],\\
 \mu =  - m({{\dot x}_c}\cos \theta  + {{\dot y}_c}\sin \theta )\dot \theta,
	\end{array}
	\end{equation}
	where $M_c(q) \in {\mathbb{R}^{n \times n}}$ is a symmetric positive definite inertia matrix, $V_c(q,\dot q) \in {\mathbb{R}^{n \times n}}$ is the coriolis and centripetal matrix, $F_c(\dot q) \in {\mathbb{R}^{n}}$ stands for the surface friction,  ${\tau _d}$ is the bounded unknown disturbance, $B_c(q) \in {\mathbb{R}^{n}}$ indicates the input transformation matrix, $\tau_c  \in {\mathbb{R}^{n}}$ is the input vector and $A_c(q) \in {\mathbb{R}^{c \times n}}$, $\mu  \in {\mathbb{R}^{c}}$ are the constraint associated matrix and vector force, respectively.
	
	Furthermore, kinematic equality constraints can be expressed as
	\begin{equation}\label{1st}
	A_c(q)\dot q = 0.
	\end{equation}
	There is a full rank matrix $S(q) \in {\mathbb{R}^{n \times (n-c)}}$ constituting of a set of smooth linearly independent vector fields that spans the null space of $A_c(q)$, i.e.,
	\begin{equation}\label{2nd}
	S^\top(q)A_c^\top(q)=0.
	\end{equation}
	According to (\ref{1st}) and (\ref{2nd}), a vector time function $\nu (t) \in {\mathbb{R}^{ n-c}}$ could be found such that for all $t$
	\begin{equation*}
	\dot q=S(q)\nu (t).
	\end{equation*}
	The nonholonomic constraint states that the robot can only move in the direction normal to the axis of the driving wheels.  If $r=n-c$, after transformation from $q$ coordinates to $\nu$ configuration, the new input matrix $\bar B_c$ is constant and nonsingular. In this case, the mentioned matrices and transformed dynamics can be described as
	\begin{align*}
	\begin{array}{l}
	\nu  = \left[ {\begin{array}{*{20}{c}}
		{{\nu_1}}\\
		{{\nu_2}}
		\end{array}} \right] = \left[ {\begin{array}{*{20}{c}}
		v\\
		\omega 
		\end{array}} \right],\\
	\dot q = \left[ {\begin{array}{*{20}{c}}
		{{{\dot x}_c}}\\
		{{{\dot y}_c}}\\
		{\dot \theta }
		\end{array}} \right] = \left[ {\begin{array}{*{20}{c}}
		{\cos \theta }&{ - p\sin \theta }\\
		{\sin \theta }&{p\cos \theta }\\
		0&1
		\end{array}} \right]\left[ {\begin{array}{*{20}{c}}
		v\\
		\omega 
		\end{array}} \right],\\
	\bar B_c = {S^\top}B_c = \frac{1}{r}\left[ {\begin{array}{*{20}{c}}
		1&1\\
		R&{ - R}
		\end{array}} \right],
	\end{array}
	\end{align*}
	\begin{align*}
		({S^\top}{M_c}S)\dot \nu + {S^\top}({M_c}\dot S + {V_c}S)\nu + \bar F(\nu) + {{\bar \tau }_d} = {S^\top}{B_c}{\tau _c},
	\end{align*}
	where $v$ and $w$ are the linear and angular velocities of the mobile robot. Besides, $x_c$ and $y_c$ are indications of its position and $\theta$ the orientation of that. 
	The agents communicate with each other via the specified directed graph in Figure \ref{jo}. 
	The Laplacian matrix of the communication topology is
	\begin{equation*}
	{L} = \left[ {\begin{array}{*{20}{c}}
1&0&0&{ - 1}\\
{ - 1}&1&0&0\\
0&{ - 1}&1&0\\
0&0&{ - 1}&1
\end{array}} \right].
	\end{equation*}
	\begin{figure}
		\centering
		\includegraphics[width=4cm]{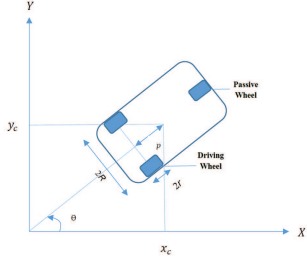}
		\caption{A nonholonomic mobile platform}
		\label{figurelabel}
	\end{figure}
			\begin{figure}
		\centering
		\includegraphics[width=3cm]{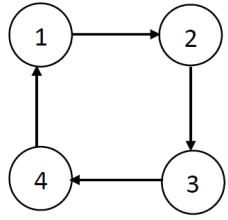}
		\caption{directed communication graph}
		\label{jo}
	\end{figure}
In this regard, the eigenvalues of the Laplacian matrix and the resulting matrix $M$ which has been used to construct the consensus error vector are as
\begin{equation*}
\begin{array}{l}
{\lambda } = {0,1+i, 1-i, 2};\\
l=[0.25, 0.25, 0.25, 0.25]^\top;\\
M = \left[ {\begin{array}{*{20}{c}}
	{ 0.0771}&{0.2992}&{ - 0.3927}&{0.0164}\\
	{0.1178}&{0.1751}&{ 0.1386}&{-0.4316}\\
	{ -0.4095}&{0.2594}&{ 0.1185}&{0.0316}
	\end{array}} \right].
\end{array}
\end{equation*}
	Therefore, it is concluded that $P=4 \times {I_6}$. The nominal parameters of the model are taken as ${m_{i}} = 10$ kg as the mass of vehicle, $R_i= 0.
	5$ m for the distance between driving wheels, $r_i = 0.05$ m for the radius of wheels and $p_i = 0.04$ m as the distance between the center of mass of the vehicle and wheels $(i=1,...,4)$. The initial conditions of agents are set as ${\nu}(1) = {[0.5,0.5]^\top}, {\nu}(2) = {[0.3,0.2]^\top},{\nu}(3) = {[0.8,0.1]^\top},{\nu}(4) = {[0.1,0.7]^\top
	}$. In addition, the delay between all the connected agents has been chosen as $(0.1 + 0.25{e^{ - t}})$s.
	\newline
	To obtain control signal parameters, the consensus conditions in (\ref{ineqq1})-(\ref{ineqq3}) were solved as LMI using CVX toolbox. We choose $\alpha=1.2$, $a=10$, $b=0.1$, $\mu=0.9$, $r=2$ and $G(X)=\diag(g_i(x_i))$, $i=1,...,4$, where $g_i(x_i)=(\frac{{x_i}_1^2}{1+{x_i}_1^2};\frac{{x_i}_2^2}{1+{x_i}_2^2})$, which is a function of velocities modeling the friction and $w(t)=1$. We also choose ${K_2} = \left[ {\begin{array}{*{20}{c}}
		{ - 5.14}&{  5.12}\\
		{3.55}&{ - 2.71}
		\end{array}} \right]$, since it is not a linear parameter in the inequalities. We obtain $\gamma_{min}=1.5$, ${K_1}=\left[ {\begin{array}{*{20}{c}}
{ - 28.36}&{  25.6}\\
{25.32}&{12.47}
\end{array}} \right]$, ${K_3}=\left[ {\begin{array}{*{20}{c}}
{ - 32.15}&{  26.21}\\
{23.54}&{14.77}
\end{array}} \right]$, $T_c \le 8.05$ s for the $H_{\infty}$ consensus. 
	\begin{figure*}
		\centering
		\begin{minipage}[b]{0.3\textwidth}
		\centering
			\includegraphics[width=\textwidth,height=4cm]{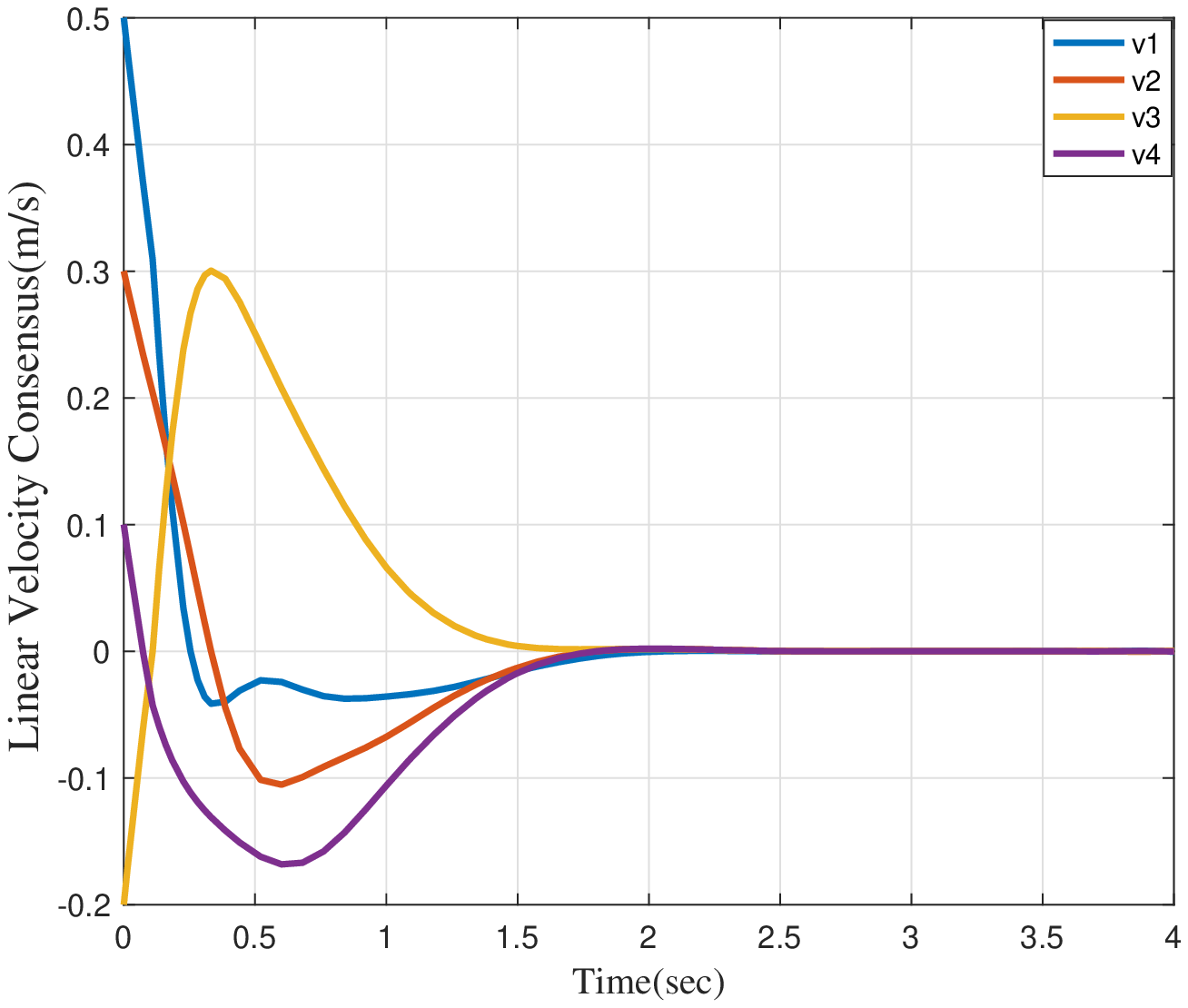}
			\subcaption(a) {Consensus of linear velocities}
			\label{v}
		\end{minipage}%
		\begin{minipage}[b]{0.3\textwidth}
		\centering
			\includegraphics[width=\textwidth,height=4cm]{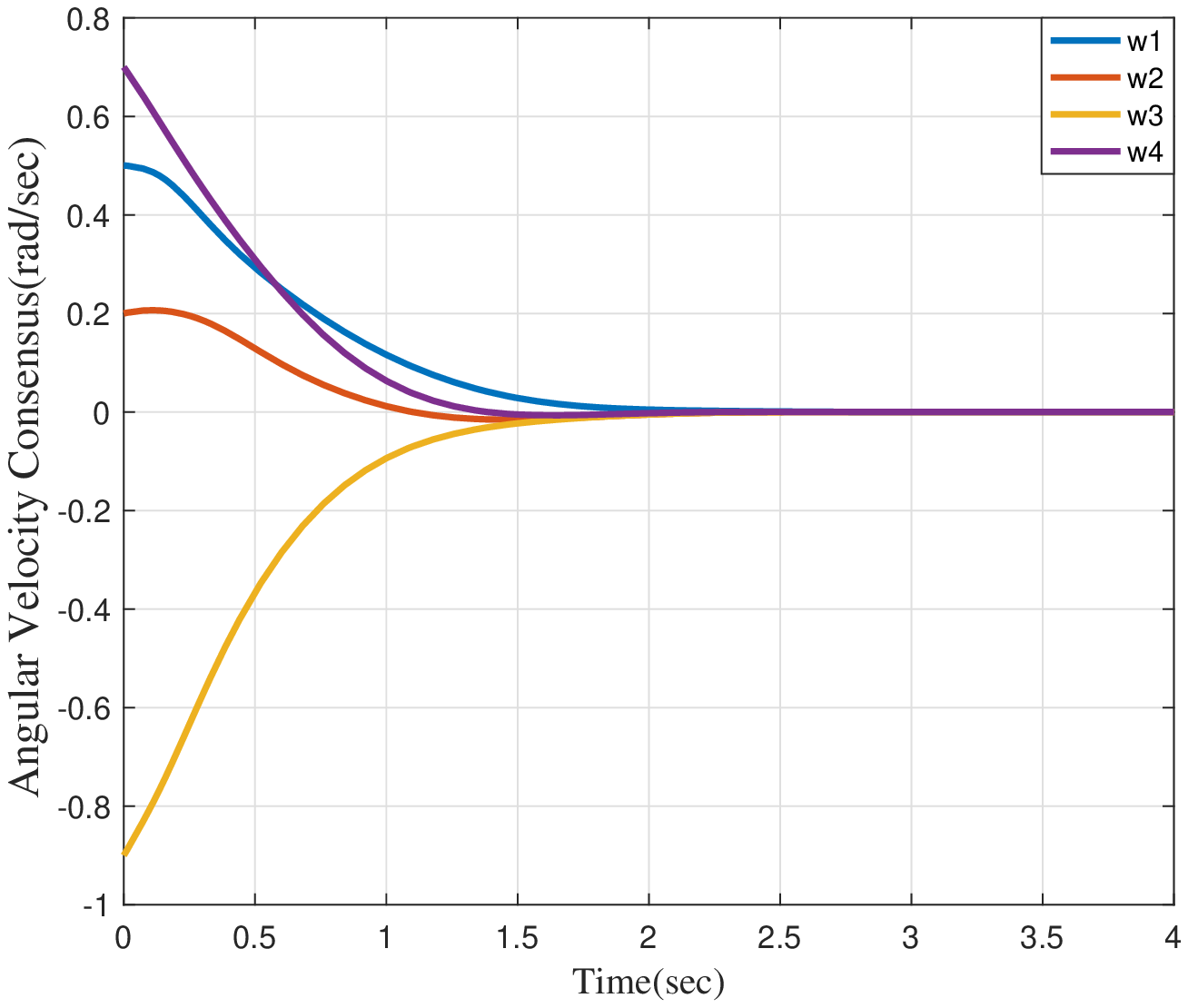}
			\subcaption(b) {Consensus of angular velocities}
			\label{w}
		\end{minipage}     
		\begin{minipage}[b]{0.3\textwidth}
		\centering
			\includegraphics[width=\textwidth,height=4cm]{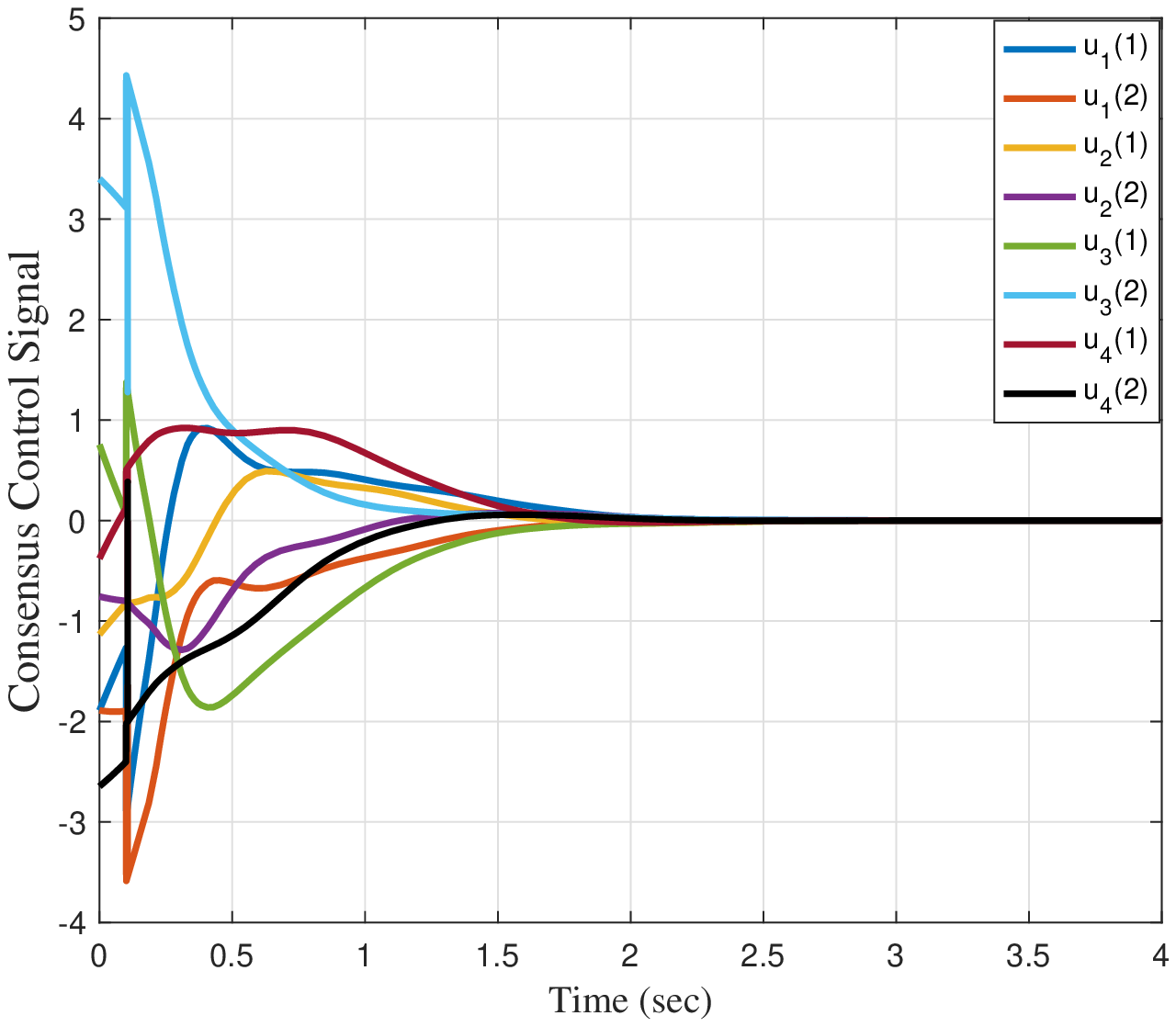}
			\subcaption(c) {Consensus control signals}
			\label{u}
		\end{minipage}
		\begin{minipage}[b]{0.3\textwidth}
		\centering
			\includegraphics[width=\textwidth,height=4cm]{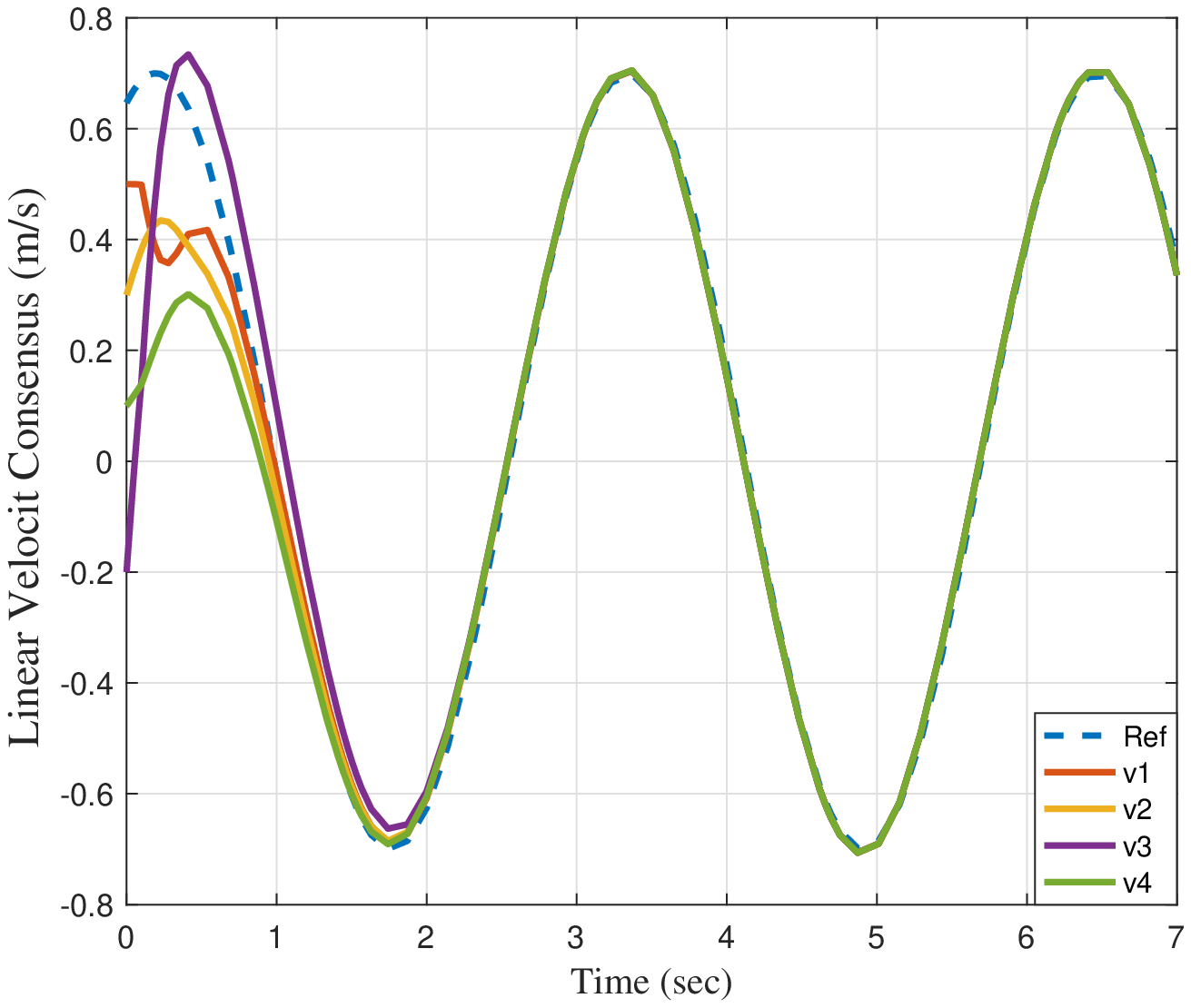}
			\subcaption(d) {Non-stationary target consensus of linear velocities}
			\label{v_station}
		\end{minipage}  
		\begin{minipage}[b]{0.3\textwidth}
		\centering
			\includegraphics[width=\textwidth,height=4cm]{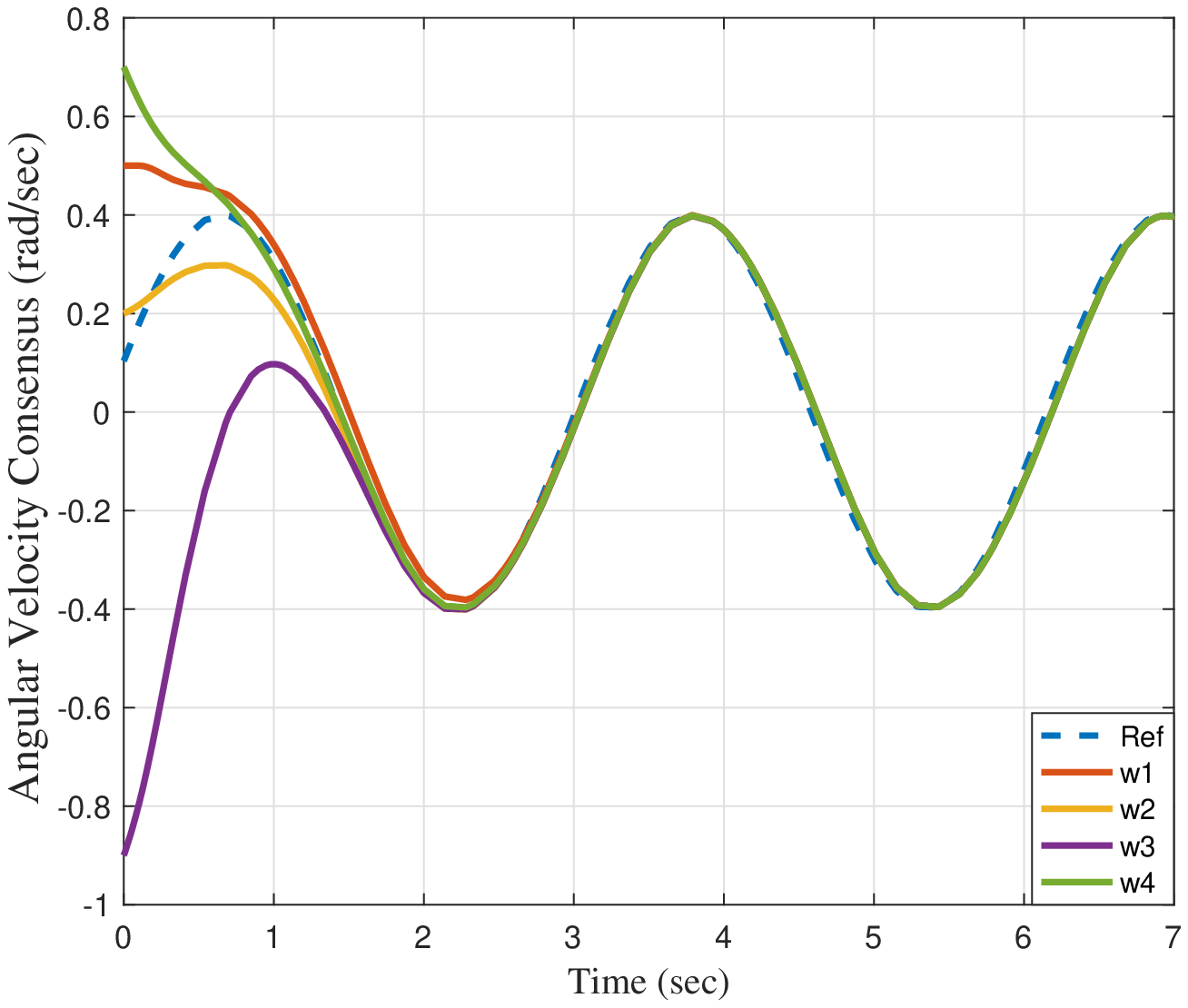}
			\subcaption(e) {Non-stationary target consensus of angular velocities}
			\label{w_station}
		\end{minipage}  
		\begin{minipage}[b]{0.3\textwidth}
		\centering
		\includegraphics[width=\textwidth,height=4cm]{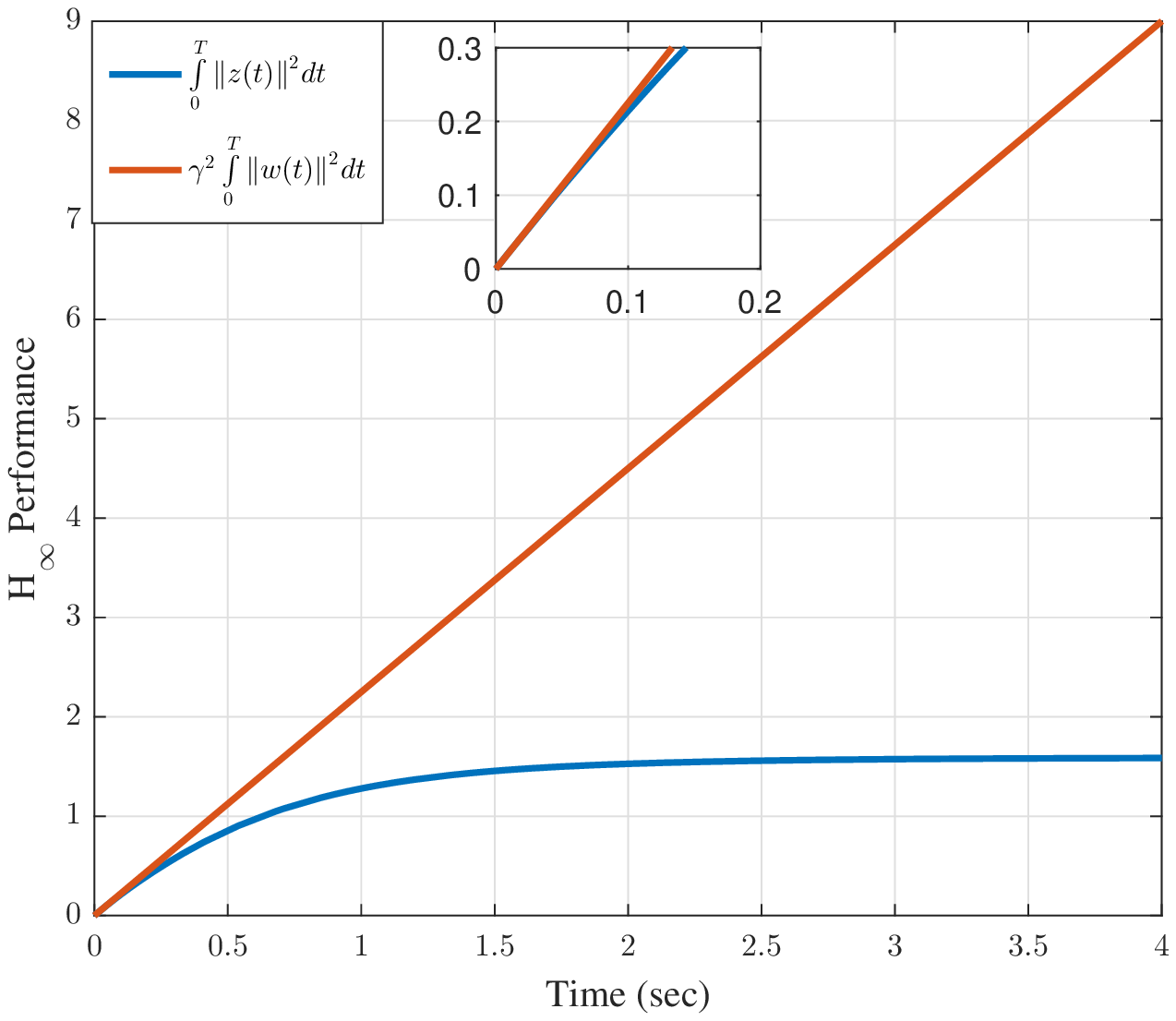}
		\subcaption(f) {$H_\infty$ performance}
		\label{per}
	\end{minipage}  
	\centering
		\caption{$H_\infty$ Finite-time consensus}\label{full}
	\end{figure*}   
	\begin{figure*}
		\centering
		\begin{minipage}[b]{0.3\textwidth}
		\centering
			\includegraphics[width=\textwidth,height=4cm]{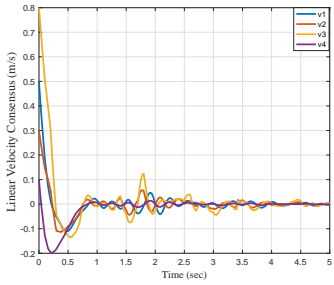}
			\subcaption(a) {Consensus of linear velocities}
			\label{v_part}
		\end{minipage}%
		\begin{minipage}[b]{0.3\textwidth}
		\centering
			\includegraphics[width=\textwidth,height=4cm]{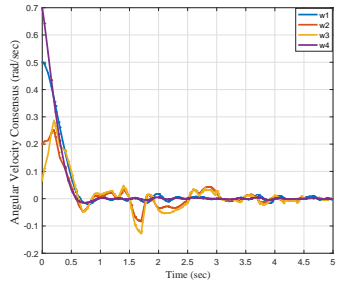}
			\subcaption(b) {Consensus of Angular velocities}
			\label{w_part}
		\end{minipage}
\begin{minipage}[b]{0.3\textwidth}
\centering
			\includegraphics[width=\textwidth,height=4cm]{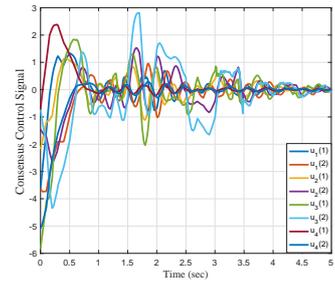}
			\subcaption(c) {Consensus control signals}	\label{u_part}
		\end{minipage}  
		\begin{minipage}[b]{0.3\textwidth}
		\centering
		\includegraphics[width=\textwidth,height=4cm]{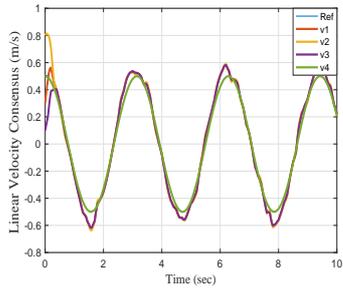}
		\subcaption(d) {Non-stationary target consensus of linear velocities}\label{v_leader}
	\end{minipage}  
	\begin{minipage}[b]{0.3\textwidth}
	\centering
	\includegraphics[width=\textwidth,height=4cm]{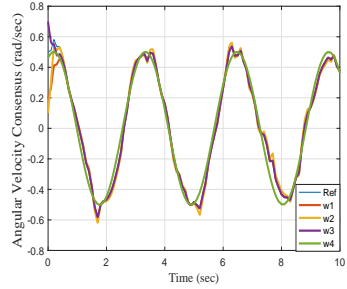}
	\subcaption(e) {Non-stationary target consensus of angular velocities}\label{w_leader}
\end{minipage}       
		\caption{Stochastic Finite-time consensus}\label{partial}
	\end{figure*} 
Figure \ref{full} shows the results for two cases of leaderless and non-stationary targets in which the fourth agent has been chosen as the reference of consensus scheme and $0.7\cos(2t-\pi/8)$, $0.4\sin(2t+\pi/12)$ are the targets for linear and angular velocities, respectively. 
For the stochastic consensus, $H(X)=\diag(h_i(x_i)), i=1,...,4$ with $h_i(x_i)=({x_i}_1^{\frac{\alpha}{2\alpha-1}};{x_i}_2^{\frac{\alpha}{2\alpha-1}})$, ${K_2} = \left[ {\begin{array}{*{20}{c}}
{ - 2.13}&{  3.86}\\
{3.52}&{ - 1.15}
\end{array}} \right]$  and a white noise with power of $0.1$ are chosen and control parameters are obtained as ${K_1} = \left[ {\begin{array}{*{20}{c}}
{ - 32.15}&{  29.22}\\
{20.31}&{17.94}
\end{array}} \right]$, ${K_3} = \left[ {\begin{array}{*{20}{c}}
{ - 34.54}&{  24.16}\\
{21.41}&{27.12}
\end{array}} \right]$. In addition, $T_{c1} \le 9.2$ s.
Figure \ref{partial} presents the stochastic consensus results. The reference $0.5\cos(2t)$ has been considered for the tracking scenario. It is worth noting that the leader-follower case formulation for stochastic consensus is similar to Section 4, in which $H(X)d\rho$ in (\ref{stoch_model}) is substituted for $G(X)w$ in (\ref{final_dot_e}). 
			\begin{remark}
			In comparison to other works, this paper gives a synchronous consensus which is ${\lim x_i(t)-x_j(t)= 0}$ and was not be achieved before in the presence of communication delays in a finite-time interval (see e.g.,  \cite{li2018finite}, \cite{sharifi2020finite}). This result is more promising, especially when the delays are long. Besides, providing the specific robust algorithms for deterministic and stochastic disturbances allows to confront them more effectively. 
	\end{remark}	
	
	\section{CONCLUSIONS}\label{conc}
	In this paper, the problem of robust finite-time consensus of a class of nonlinear systems in the presence of communication delays is considered. In order to compensate for the adverse effects of time-delays and uncertainties in a finite time interval, novel delay-dependent consensus algorithms have been suggested. In this regard, deterministic and stochastic uncertainties are considered and appropriate strategies to come up with them are proposed. The  algorithms can be used for cases with partial access of agents to their neighbor signals. In addition, there are not strong limitations on communication time-delays.  Future works include the robust finite-time consensus strategy for time-varying graph topologies.

					\bibliographystyle{IEEEtran}
					\bibliography{refss}   
					
				\end{document}